\newtheorem{thm}{Theorem}
\numberwithin{thm}{section}
\newtheorem{cor}[thm]{Corollary}
\newtheorem{prop}[thm]{Proposition}
\newtheorem{defn}[thm]{Definition}
\newtheorem{conj}[thm]{Conjecture}
\renewcommand{\thesection}{\arabic{section}}
\renewcommand{\thesubsection}{\thesection.\arabic{subsection}}
\renewcommand{\p@subsection}{}
\renewcommand{\p@subsubsection}{}
\newcommand{\abs}[1]{\lvert{#1}\rvert}
\newcommand{\transpose}{\mathsf{T}}
\newcommand{\ident}[0]{\mathds{1}}
\begin{document}

\title{Towards self-correcting quantum codes for neutral atom arrays}

\author{Jinkang Guo}
\affiliation{Department of Physics, University of Colorado, Boulder, CO 80309, USA}
\affiliation{Center for Theory of Quantum Matter, University of Colorado, Boulder, CO 80309, USA}

\author{Yifan Hong}
\affiliation{Department of Physics, University of Colorado, Boulder, CO 80309, USA}
\affiliation{Center for Theory of Quantum Matter, University of Colorado, Boulder, CO 80309, USA}
\affiliation{Joint Quantum Institute \& Joint Center for Quantum Information and Computer Science, NIST/University of Maryland, College Park, MD 20742, USA}

\author{Adam Kaufman}
\affiliation{Department of Physics, University of Colorado, Boulder, CO 80309, USA}
\affiliation{JILA, University of Colorado and National Institute of Standards and Technology, Boulder, Colorado 80309, USA}

\author{Andrew Lucas}
\email{andrew.j.lucas@colorado.edu}
\affiliation{Department of Physics, University of Colorado, Boulder, CO 80309, USA}
\affiliation{Center for Theory of Quantum Matter, University of Colorado, Boulder, CO 80309, USA}

\date{\today}

\begin{abstract}
Discovering low-overhead quantum error-correcting codes is of significant interest for fault-tolerant quantum computation. For hardware capable of long-range connectivity, the bivariate bicycle codes offer significant overhead reduction compared to surface codes with similar performance. In this work, we present ``ZSZ codes'', a simple non-abelian generalization of the bivariate bicycle codes based on the group $\mathbb{Z}_\ell \rtimes \mathbb{Z}_m$. We numerically demonstrate that certain instances of this code family achieve competitive performance with the bivariate bicycle codes under circuit-level depolarizing noise using a belief-propagation and ordered-statistics decoder, with an observed threshold around $0.5\%$.  We also benchmark the performance of this code family under local ``self-correcting'' decoders, where we observe significant improvements over the bivariate bicycle codes, including evidence of a sustainable threshold around $0.095\%$, which is higher than the $0.06\%$ that we estimate for the four-dimensional toric code under the same noise model. These results suggest that ZSZ codes are promising candidates for scalable self-correcting quantum memories. Finally, we describe how ZSZ codes can be realized with neutral atoms trapped in movable tweezer arrays, where a complete round of syndrome extraction can be achieved using simple global motions of the atomic arrays.
\end{abstract}

\maketitle
\tableofcontents

\section{Introduction}

Recent advances in experimental quantum processors have resulted in numerous demonstrations of small-scale quantum error correction (QEC) \cite{Google_SC_1, Google_SC_2, paetznick2024, Hong_2024_H2, Berthusen_2024_4D, reichardt2024tess, daguerre2025, dasu2025, Bluvstein_2023, rodriguez2024magic, reichardt2025neutral, bluvstein2025arch}, a key component of fault-tolerant quantum computation. The sizes of these experiments range from tens to hundreds of physical qubits with logical qubits in the range of a few to a few dozen.

With larger scale quantum processors come the prospects of increasingly sophisticated quantum error-correcting codes. A significant milestone was the discovery of quantum LDPC codes with asymptotically optimal (good) $\llbracket n,k=\mathrm{\Theta}(n),d=\mathrm{\Theta}(n) \rrbracket$ parameters \cite{Panteleev_2022_good, qTanner_codes, DHLV_codes}. Many other exciting advances include the discovery of asymptotically good non-LDPC quantum codes capable of magic state distillation with constant overhead \cite{wills2024constant}, quantum LDPC codes with high rate and transversal non-Clifford gates \cite{lin2024transversal, golowich2024transversal, breuckmann2024cups}, non-LDPC quantum codes with addressable non-Clifford gates \cite{he2025addressable, he2025goodaddressable}, and end-to-end fault tolerance with constant spatial and low temporal overheads with concatenated codes \cite{Yamasaki_2024} and LDPC codes \cite{nguyen2024constant, tamiya2024constant}. There has furthermore been progress in other aspects of fault tolerance such as lowering the overhead for addressable logical Pauli measurements \cite{williamson2024surgery, cowtan2024ssip, swaroop2025adapters, he2025extractors, yoder2025tour}, a sufficient ingredient for addressable logical Clifford gates and the Pauli-based computational model \cite{Litinski_2019}. 

For the near-term, the asymptotically good codes might not be practical due to their large connectivity requirements. Still, one may hope that the mathematical techniques used to discover these codes can give rise to simple small instances that nevertheless have favorable properties.  One such example is arguably a family of quantum LDPC codes called bivariate bicycle (BB) codes, which achieve competitive memory performance with the surface code while occupying considerably fewer physical qubits, at the cost of nonlocal connectivity \cite{BB_codes}. It was later shown that a fault-tolerant architecture consisting of BB codes and surface codes as memory and computational blocks respectively achieved a lower footprint than surface codes on a wide range of practical algorithms \cite{viszlai2024, stein2024, yoder2025tour}. 
However, all BB codes are local (with finite-range interactions) in some finite-dimensional Euclidean space \cite{arnault2025BPT}, which constrains the ultimate scaling of the code parameters with system size  \cite{Bravyi_2009, BPT}. These bounds motivate the study of ways to minimally circumvent them, while hopefully retaining their favorable error-correcting properties.
 
Given that neutral atom platforms have the potential ability to realize the nonlocal interactions required of very high-dimensional codes, this paper asks the question of whether -- with a similar code size and implementation overhead -- there are ``better'' potential codes than BB codes.  In this paper, we will focus on the question of finding quantum memories, including those which may exhibit self-correction and admit passive decoding. Passive QEC offers a few alluring properties over traditional QEC, typically at the expense of a lower threshold. First, it enables a new paradigm of error correction that forgoes mid-circuit measurements and feedback, also known as measurement-free quantum error correction (MFQEC) \cite{Dennis_2002, Ahn_2002, Sarovar_2005}. Second, each decoding ``cycle'' in a passive memory is typically a constant-depth circuit that does not depend on any information from previous cycles, analogous to single-shot error correction \cite{Spielman_1996_red, Bombin_2015}, and so constant-depth logical operations (e.g. transversal) can be performed between each cycle. In other words, a logical cycle of a passive quantum memory takes $O(1)$ time, in units of syndrome extraction cycles. In contrast, the typical logical cycle for surface codes and BB codes requires $\mathrm{\Theta}(d)$ time \cite{Dennis_2002, BB_codes}; logical operations can be performed at a faster rate but at the cost of increased decoding complexity \cite{cain2025corr, zhou2024alg}.
 
In this paper, we introduce the \textbf{ZSZ codes}, a family of quantum LDPC codes where each parity check involves six qubits, and each qubit participates in six parity checks split evenly between three $X$-type and three $Z$-type checks. These codes are a non-abelian generalization of the BB codes where we ``twist'' the associated product involved in the construction. This twist, otherwise known as a semidirect product, involves a relatively simple adjustment to the microscopic rules for how qubits and checks are connected; see Figure \ref{fig:BB vs ZSZ layout} for an illustration. Nonetheless, we demonstrate that this modification can lead to \emph{global} changes in code properties, the most drastic of which is the possibility for passive error correction. Under our experimentally inspired noise model, we observe a threshold around $0.5\%$ for ZSZ codes under $d$ rounds of syndrome extraction and global decoding, which is close to the estimated $0.8\%$ threshold for the surface code under the same noise model. Using a passive ``self-correcting'' decoder, we observe a sustainable threshold around $0.095\%$, which is higher than the estimated $0.06\%$ threshold for the four-dimensional toric code. To the best of our knowledge at the time of writing, $0.095\%$ is the highest observed sustainable threshold for passive decoding of any known quantum LDPC code under similar circuit-level noise. Note that this threshold can potentially increase if measurements are utilized and decoding is performed ``offline'' on a noiseless classical computer; when running the decoder in the measurement-free setting, one needs to also take into account possible faults in its implementation. We finally study the implementation of ZSZ codes as a memory in neutral atom arrays and describe the necessary optical-tweezer movements required to perform syndrome extraction. Similar to the BB codes, we provide a two-dimensional rectangular embedding of ZSZ codes. With respect to this embedding, we then construct a routing protocol for syndrome extraction whose complexity is logarithmic in the horizontal dimension and linear in the vertical dimension. Although our ZSZ routing complexity increases with system size, unlike that of toric and BB codes with constant complexity, the single-shot property of ZSZ codes allows for fewer rounds of syndrome extraction within a logical cycle. A more detailed study of this tradeoff would be important when deciding between ZSZ codes or BB codes for the neutral-atom architecture.

The paper is organized as follows. In Section \ref{sec:two-block codes}, we review the construction of quantum stabilizer codes from two-block matrices and focus our attention on those based on group algebras. In Section \ref{sec:ZSZ codes}, we introduce the ZSZ codes as a special family of these two-block quantum codes and present some geometrical and algebraical arguments relevant to their understanding. In Section \ref{sec:numerics}, we discuss our noise model and present the results of our numerical simulations. In Section \ref{sec:neutral atom implementation}, we discuss the implementation of ZSZ codes in the neutral-atom architecture and present a routing protocol to realize their long-range connectivity. Finally in Section \ref{sec:outlook}, we close with some open questions and concluding remarks. The Appendices contain technical details for the arguments in the main text.

\begin{figure}[t]
    \centering
    \includegraphics[width=\textwidth]{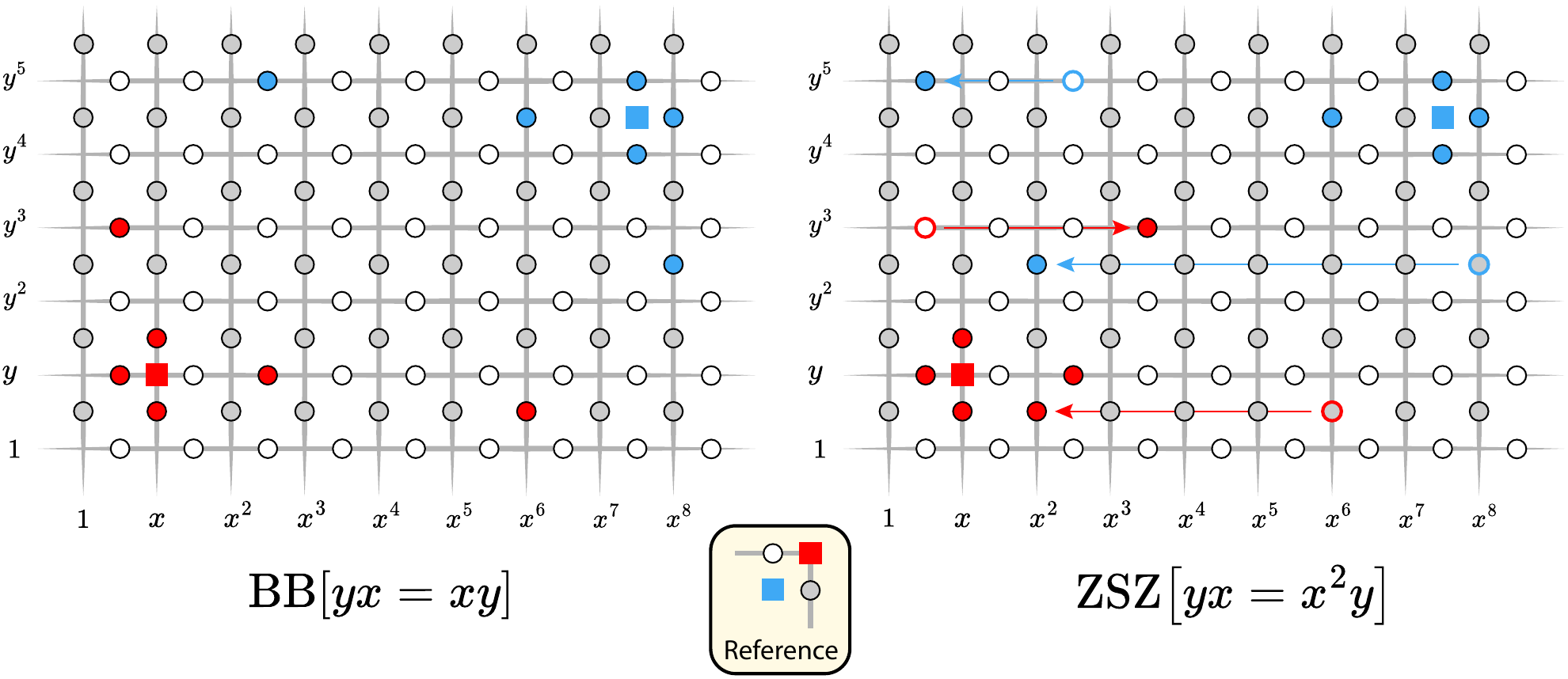}
    \caption{The physical layouts of a BB and a ZSZ code with shared polynomials $a = 1+x^2+y^2$ and $b=1+x^5+y$ are depicted. Qubits (circles) live on the horizontal and vertical links of a $9\times6$ grid with periodic boundaries. The red (blue) square denotes an $X$-check ($Z$-check) and the red-shaded (blue-shaded) circles label its support, which can be computed through $a$ and $b$. The arrows on the right side show how the check support changes upon applying the twist $yx=x^2y$, which converts the BB code to the ZSZ code. In contrast to BB codes, the other checks of ZSZ codes cannot generically be obtained upon translation.}
    \label{fig:BB vs ZSZ layout}
\end{figure}

\subsection{Related works}

Paz-Silva \emph{et al.} \cite{Paz-Silva_2010} design a MFQEC protocol for the 9-qubit Bacon-Shor code and demonstrate a pseudothreshold around $4\times 10^{-5}$ for preparation and gate errors. Heu\ss{}en \emph{et al.} \cite{Heuben_2024} design a flagged-based MFQEC protocol for the 7-qubit Steane code and achieve a pseudothreshold around $6\times 10^{-5}$ with a single-parameter noise model and a two-qubit gate decomposition as well as $6\times 10^{-4}$ with a multi-parameter noise model including native multiqubit gates such as CCZ. Our noise model is relatively close to theirs, but we note that their simulations involve the full state vector and so can account for coherent errors from non-Clifford gates. More recently, Butt \emph{et al.} \cite{butt2025MFQEC} experimentally demonstrate a universal set of measurement-free fault-tolerant gadgets using a combination of the $\llbracket 4,2,2\rrbracket$ ``Iceberg'' code and the $\llbracket 8,3,2\rrbracket$ color code.

The above works focus on specific small-instance codes, and scalable fault tolerance is achieved through concatenation. In contrast, our proposed scheme relies on constant-depth decoding, and scalable fault tolerance is achieved through a growing family of LDPC codes. Recently, Park \emph{et al.} \cite{Park_2025} study this problem for the 2D repetition and 4D toric codes and employ classical reinforcement learning to achieve MFQEC circuits with multiqubit gates that outperform conventional local decoders based on Toom's (sweep) rule in the subthreshold regime. Their tools are general-purpose, and it would be interesting to see if and by how much they can improve passive thresholds when applied to ZSZ codes.


\section{Review of two-block codes} \label{sec:two-block codes}

In this section, we review some of the concepts used in our code construction, starting from an abstract prescription of two-block quantum codes \cite{2block_CSS} and then focusing on specific two-block codes built with the help of a group algebra \cite{wang2023, 2BGA}.

\subsection{Two-block quantum CSS codes}

A Calderbank-Shor-Steane (CSS) code \cite{Calderbank_1996, Steane_1997} is a specific type of quantum stabilizer code \cite{gottesman1997} whose Pauli checks are strictly $X$-type or $Z$-type. The Pauli checks can be neatly packaged into the rows of two binary parity-check matrices $H_X$ and $H_Z$, and the stabilizer commutativity condition becomes the orthogonality condition $H^{}_X H^\transpose_Z = 0$, where addition is performed modulo 2.  
 Such a code is an LDPC (low-density parity-check) code if the rows and columns of $H_X$ and $H_Z$ have an O(1) number of non-zero entries.
 
A two-block CSS code \cite{2block_CSS} builds the above parity-check matrices with the help of two $n\times n$ commuting square matrices $A$ and $B$:
\begin{subequations}\label{eqs:2block H_X, H_Z}
\begin{align}
    H_X &= \left(\, A \;\big|\; B \,\right)  \\
    H_Z &= \left(\, B^\transpose \;\big|\; A^\transpose \,\right)  \, .
\end{align}
\end{subequations}
Since $[A,B]=0$, we have $H^{}_X H^\transpose_Z = AB+BA = [A,B] = 0$,\footnote{$+$ and $-$ are equivalent modulo 2} and thus $H_X$ and $H_Z$ define a valid CSS code. Since $A$ and $B$ are $n\times n$ matrices, there are $2n$ physical qubits and $n$ Pauli checks of each type. Because of the two-block structure above, we also have that
\begin{subequations}
\begin{align}
    \ker{A}\,,\, \ker{B} &\subset \ker{H_X}  \\
    \ker{A^\transpose}\,,\, \ker{B^\transpose} &\subset \ker{H_Z}  \, .
\end{align}
\end{subequations}
where by slight abuse of notation, we write $\ker{A}$ to mean $\ker{A}$ on the left block and zeros on the right block, and vice versa for $\ker{B}$. The task is now to construct the commuting matrices $A$ and $B$, which we can view as two classical linear codes. Ideally, we would like $A$ and $B$ to have large code distances in the hopes that their two-block CSS code will as well. In addition, we would also like $A$ and $B$ to be sparse so that the resulting CSS code is LDPC and hence has inherent fault-tolerant properties \cite{Breuckmann_2021_LDPC}.

\subsection{Two-block group algebra (2BGA) codes}
\label{sec:2BGA}

One method of constructing two sparse, commuting matrices $A$ and $B$ is with the help of a group algebra \cite{wang2023, 2BGA}. Suppose we have a finite group $\mathcal{G}$ of order $n$. A group algebra $K[\mathcal{G}]$, where $K$ is a field, is an object that marries the additive properties of fields and the multiplicative properties of groups. Since we are interested in qubit CSS codes constructed from binary matrices, we will choose $K=\mathbb{F}_2$ as our field. A generic element $a$ of the group algebra $\mathbb{F}_2[\mathcal{G}]$ can be written as a linear combination of group elements with binary coefficients:
\begin{align}\label{eq:group algebra element example}
    a = \sum_{i=1}^{n} c_i g_i \, .
\end{align}
Addition and multiplication are done in a natural way.  For example, $(g_1+g_2) + (g_3+g_4) = g_1+g_2+g_3+g_4$, while $(g_1+g_2)(g_3+g_4) = g_1g_3+g_1g_4+g_2g_3+g_2g_4$ and $g_1 +g_1=0$.

For a 2BGA code, we will choose two elements $a,b \in \mathbb{F}_2[\mathcal{G}]$. Because we are ultimately constructing LDPC codes, we will choose $a$ and $b$ to have a finite number of terms, which we will  see results in sparse parity-check matrices. Our binary matrices $A$ and $B$ are given by a binary matrix representation of our group-algebra elements $a$ and $b$:
\begin{align} \label{eq:binary matrix A and B}
    A = \mathbb{B}[a] \quad,\quad B = \mathbb{B}[b] \, ,
\end{align}
where $\mathbb{B}[a]$ denotes some binary matrix representation of $a \in \mathbb{F}_2[\mathcal{G}]$, which is usually taken as the (binary) regular representation. In the regular representation, a group element maps to a unique basis vector over $\mathbb{F}^n_2$, and its group action becomes an $n \times n$ permutation matrix which encodes the group's multiplication (Cayley) table. For non-abelian groups, we may choose this regular representation to correspond to either left ($L[\cdot]$) or right ($R[\cdot]$) multiplication\footnote{Note that here we are using a slight abuse of notation. Usually the right-regular representation is defined as $R[a] \equiv a^{-1}$ so that the representation condition $R[a]R[b]=R[ab] \equiv b^{-1}a^{-1}$ is satisfied. So when we write $R[a]$ to denote right-multiplication by $a$, we really mean $R[a^{-1}]$ by the standard convention.}, a fact which will prove useful shortly. It follows that $A$ and $B$ are sums of permutation matrices. For example, using the group $\mathcal{G}=\mathbb{Z}_3$ with polynomial $a=1+x$, we get
\begin{align}
    A = \begin{pmatrix}
        1 & 1 & 0 \\
        0 & 1 & 1 \\
        1 & 0 & 1
    \end{pmatrix} \, ,
\end{align}
which is a classical parity-check matrix for the 3-bit repetition code.
A benefit of using the regular representation is that the row and column weights of $A$ and $B$ are immediately bounded by the number of nonzero terms of their corresponding group-algebra elements $a$ and $b$.  Choosing only a small constant number of terms in \eqref{eq:group algebra element example} results in a sparse $A$ and $B$. Commutativity between $A$ and $B$ follows from that of $a$ and $b$, which always holds if $\mathcal{G}$ is abelian. As an example, bivariate bicycle (BB) codes \cite{BB_codes} are 2BGA codes over the abelian group $\mathbb{Z}_\ell \times \mathbb{Z}_m$ for integers $\ell,m>0$.

When $\mathcal{G}$ is non-abelian, $A$ and $B$ are no longer guaranteed to commute since $a$ and $b$ do not necessarily commute anymore. However, recall that for a non-abelian group, we have the choice of choosing either the left-regular or right-regular representations for $A$ and $B$. As such, we choose $A = L[a]$ as the left-regular representation and $B = R[b]$ as the right-regular representation. The associativity of group multiplication then ensures that $a$ and $b$, and thereby $A$ and $B$, commute.

As an example, the symmetric group $\mathrm{S}_3$ on three symbols is the smallest non-abelian group. In cycle notation, all six of its elements are
\begin{align}\label{eq:S_3 group elements}
    \mathrm{S}_3 = \big\{(1),(1,2),(1,3),(2,3),(1,2,3),(1,3,2)\big\} \, .
\end{align}
Since $\abs{\mathrm{S}_3} = 6$, its regular representation will be six-dimensional, and each of the above group actions will map to some $6\times 6$ permutation matrix. Take the transposition $(1,3)$ for instance. Using the ordering of \eqref{eq:S_3 group elements}, it will map to the basis vector $(0,0,1,0,0,0) \in \mathbb{F}^6_2$. Now, acting $(1,3)$ to the left of \eqref{eq:S_3 group elements} gives us $(1,3)\cdot\mathrm{S}_3 = \left\{(1,3),(1,2,3),(1),(1,3,2),(1,2),(2,3)\right\}$, and acting on the right gives us $\mathrm{S}_3\cdot(1,3) = \{(1,3),(1,3,2),(1),(1,2,3),(2,3),(1,2)\}$. Comparing with the ordering of \eqref{eq:S_3 group elements}, the left-regular and right-regular representations of $(1,3)$ are the permutation matrices
\begin{align}
    L[(1,3)] = \begin{pmatrix}\;
        0 & 0 & 1 & 0 & 0 & 0 \\
        0 & 0 & 0 & 0 & 1 & 0 \\
        1 & 0 & 0 & 0 & 0 & 0 \\
        0 & 0 & 0 & 0 & 0 & 1 \\
        0 & 1 & 0 & 0 & 0 & 0 \\
        0 & 0 & 0 & 1 & 0 & 0
    \;\end{pmatrix} \quad,\quad
    R[(1,3)] = \begin{pmatrix}\;
        0 & 0 & 1 & 0 & 0 & 0 \\
        0 & 0 & 0 & 0 & 0 & 1 \\
        1 & 0 & 0 & 0 & 0 & 0 \\
        0 & 0 & 0 & 0 & 1 & 0 \\
        0 & 0 & 0 & 1 & 0 & 0 \\
        0 & 1 & 0 & 0 & 0 & 0 
    \;\end{pmatrix} \, .
\end{align}
One can quickly verify that $L[(1,3)]$ and $R[(1,3)]$ commute, as is guaranteed by the associativity of group multiplication.


\section{ZSZ codes} \label{sec:ZSZ codes}

We are now ready to define our ZSZ codes. Let $\ell,m,q$ be positive integers satisfying \begin{equation}
    q^m \equiv 1 \; \text{(mod $\ell$)}. \label{eq:qm1l}
\end{equation} We define the ZSZ group of order $\ell m$ according to the presentation
\begin{equation}\label{eq:ZSZ presentation}
    \mathbb{Z}_\ell \rtimes_q \mathbb{Z}_m := \langle\, x, y \,|\, x^\ell = y^m = yxy^{-1}x^{-q}=1 \,\rangle \, .
\end{equation}
A generic group element can be written as some word made up of the symbols $x,y$, e.g. $x^2yxy^2$. The relations in \eqref{eq:ZSZ presentation} provide us a way to compute equivalences between different words. For both $\mathbb{Z}_\ell \times \mathbb{Z}_m$ and $\mathbb{Z}_\ell \rtimes_q \mathbb{Z}_m$, we can always reduce all words to a canonical lexicographical form $x^iy^j$, where $i=0,\dots,\ell-1$ and $j=0,\dots,m-1$. The final relation $yxy^{-1}=x^q$ defines the twist that the $y$ terms apply to the $x$ terms upon conjugation and distinguishes between the abelian direct product ($q=1$) and non-abelian semidirect product ($q>1$). In the language of group theory, $\mathbb{Z}_\ell := \langle\, x\,|\,x^\ell=1 \,\rangle$ is a normal subgroup of \eqref{eq:ZSZ presentation}, and $y\in\mathbb{Z}_m$ acts on this subgroup via conjugation, i.e. the automorphism $\varphi_y(x) = x^q$ of $\mathbb{Z}_\ell$, the validity of which is ensured by \eqref{eq:qm1l}.
We define a \textbf{ZSZ code} to be a 2BGA code where the binary matrix $A$($B$) in (\ref{eqs:2block H_X, H_Z}) is the left(right)-regular representation of a $\mathbb{F}_2$-group-algebra element $a$($b$) with respect to a ZSZ group; i.e. $a,b \in \mathbb{F}_2[\mathbb{Z}_\ell \rtimes_q \mathbb{Z}_m]$. We review some basic group theories relevant to this code in Appendix \ref{app:group codes}. For brevity, we denote ZSZ$(\ell,m,q; a,b)$ to be the ZSZ code with group \eqref{eq:ZSZ presentation} and polynomials $a,b \in \mathbb{F}_2[\mathbb{Z}_\ell \rtimes_q \mathbb{Z}_m]$.

A geometrical object called a Cayley graph will be useful to describe how parity checks and qubits are connected in a ZSZ code. For any finite group $G$ equipped with a set of generators $S = \{ s_1,s_2,\dots \}$ that does not include the identity, the (directed) Cayley graph Cay$(G,S)$ is a simple graph where vertices are labeled by group elements, and two vertices share an edge if and only if their corresponding group elements are related by a generator in $S$; e.g. the vertices corresponding to $g_1$ and $g_2$ are connected by an edge if $g_1 = sg_2$ or $g_2=sg_1$ for some $s \in S$. When $G$ is non-abelian, the order of multiplication matters, which motivates the distinction between left and right Cayley graphs. Edges on the left Cayley graph represent left-multiplication by a generator, and edges on the right Cayley graph represent right-multiplication. For example, Cay$(\mathbb{Z}_\ell, \{x\})$ is a ring graph of length $\ell$, and Cay$(\mathbb{Z}_\ell \times \mathbb{Z}_m, \{x,y\})$ is an $\ell\times m$ rectangular lattice with periodic boundaries.

Our lexicographical ordering $x^iy^j$ of the ZSZ group elements motivates a two-dimensional rectangular layout where one axis labels the exponent of $x$ and the other $y$. Because we have a two-block code, we will also have two copies of this group, for a total of $2\ell m$ data qubits, which can be arranged as the links (edges) of a rectangular lattice with periodic boundaries: horizontal and vertical links comprise the qubits in the $A$ and $B$ blocks of \eqref{eqs:2block H_X, H_Z} respectively, see Figure \ref{fig:BB vs ZSZ layout}.

For a BB code, which we remind the reader corresponds to the abelian case $q=1$, it is relatively straightforward to determine the support of the parity checks from the polynomials $a$ and $b$. For an $X$-check at coordinate $(i,j)$ corresponding to the term $x^iy^j$, we examine the monomials in $a$ and $b$ and add their respective $x$ and $y$-exponents to $i$ and $j$. This simple addition is possible because we can commute the $x$ and $y$ terms through each other and combine like terms together. One can also envision these rules as traversing a path on the rectangular lattice: a generic monomial such as $x^\alpha y^\beta$ corresponds to moving $\alpha$ units along the $x$ direction and $\beta$ units along the $y$ direction. Each monomial in $a$ goes to a horizontal qubit, and each monomial in $b$ goes to a vertical qubit. For example, in Figure \ref{fig:BB vs ZSZ layout} with polynomials $a=1+x^2+y^2$ and $b=1+x^5+y$, the $X$-check at position $xy$ connects to horizontal qubits at positions given by $a(xy) = xy+x^3y+xy^3$ and vertical qubits at positions given by $(xy)b = xy+x^6y+xy^2$.

For a ZSZ code, the simple addition rules above do not work because when we move $x$ terms through $y$ terms, the exponents on the $x$ terms will change according to the twist $q$ in \eqref{eq:ZSZ presentation}. Instead, the rules are given by the following ``push-through'' relations:
\begin{subequations}\label{eq:push-through relations}\begin{align}
    (x^i y^j)x^\alpha &= x^{i+q^j\alpha} y^j \, ,  \label{eq:push-through x right} \\
    y^\beta(x^iy^j) &= x^{q^\beta i} y^{j+\beta} \, . \label{eq:push-through y left}
\end{align}\end{subequations}
Note that these push-through relations are inherently encoded in the ZSZ group's left and right Cayley graphs, and so we can recover an analogous geometrical path picture as in the abelian case upon replacing the rectangular lattice with the Cayley graphs of the corresponding ZSZ group. This prescription also encompasses the BB codes since their Cayley graphs are precisely the rectangular lattices used in their analyses. Now, when we see a monomial in $a$ like $x^\alpha y^\beta$, we will read off from right to left\footnote{This particular order follows the multiplication order of multiplying $x^\alpha y^\beta$ on the left: $y^\beta$ is acted first followed by $x^\alpha$.}: we first traverse $\beta$ steps along $y$-edges and then $\alpha$ steps along $x$-edges in the left Cayley graph. When we see a monomial in $b$ like $x^\alpha y^\beta$, we will read off from left to right: we first traverse $\alpha$ steps along $x$-edges and then $\beta$ steps along $y$-edges in the right Cayley graph. See Figure \ref{fig:X-check Cayley graphs} for an explicit example of a ZSZ code with both its underlying left and right Cayley graphs drawn out. In our rectangular layout, we see that the left Cayley graph consists of $m$ copies of $\mathbb{Z}_\ell$ arranged in rows, with neighboring rows related by the group automorphism $\varphi_y(x)=x^q$. The right Cayley graph also has $m$ rows of $\mathbb{Z}_\ell$, but $\varphi_y$ now acts independently within each row rather than between rows: the row associated with $y^j$ transforms according to $\varphi^j_y(x) = x^{qj}$.

The advantage of the nontrivial push-through relations is as follows.   Let $B_r(h)$ be the set of vertices within distance $r$ of vertex $h$ in the Cayley graph of the underlying group.   If $q=1$, i.e. we have a BB code with $p$ distinct monomials $g_1,\ldots, g_p$ in $a$ and $b$, then \begin{equation}
    |B_r(h)| = O(r^{p}). \label{eq:BrscalingBB}
\end{equation}
This bound is very loose, but it is easy to motivate: after $r$ multiplications by $p$ elements, the number of distinct group elements we can reach (up to the left/right separation of the qubits) is $g_1^{r_1}\cdots g_p^{r_p}h$ with $r=r_1+\cdots + r_p$.   The number of choices of $(r_1,\ldots, r_p)$ scales as \eqref{eq:BrscalingBB}.   In contrast, for a ZSZ code, we can have \begin{equation}
    |B_r(h)| = \exp[\mathrm{\Omega}(r)] \, . \label{eq:BrscalingZSZ}
\end{equation}
This can be seen by explicit construction with $q=2$.  Given generators $x$ and $y$ alone and any integer $J = \sum_\nu J_\nu 2^\nu$ where $J_\nu \in \{0,1 \}$ denotes the $\nu$th digit of $J$ in its binary representation, we can express
\begin{equation}
    x^J = x^{J_0}yx^{J_1}\cdots x^{J_{R-1}}yx^{J_R}y^{-R} \, ,
\end{equation}
where $R = O(\log J) \le r/3$ is the number of binary digits that we needed. Since for any given $R$ there are $2^R$ distinct values for $J$, clearly we can reach the number of group elements given by \eqref{eq:BrscalingZSZ} in the desired number of steps.

\eqref{eq:BrscalingZSZ} is desirable because it suggests that the Cayley graph exhibits small-set expansion -- the number of vertices at the boundary of small subsets is proportional to the volume.  Codes whose Cayley graphs are expanding may exhibit linear confinement, which is sufficient to realize single-shot \cite{Quintavalle_2021} and passive error correction \cite{thermal_LDPC,Placke:2024wey}.   Unfortunately, we show in Appendix \ref{app:girth} that ZSZ Cayley graphs have a constant girth, in contrast to Ramanujan graphs \cite{Lubotzky_1988} that have logarithmic girth. However, girth only tells us about the maximal degree of confinement, and linear confinement typically persists well beyond the girth \cite{Sipser_1996, McKenzie_2021}. The parametric improvement of \eqref{eq:BrscalingZSZ} over \eqref{eq:BrscalingBB} suggests that ZSZ codes may have much better performance under single-shot or autonomous decoders, relative to BB codes, a key feature which we will confirm to be the case in extensive simulations in Section \ref{sec:numerics}. For LDPC codes lacking an extensive number of redundant parity checks, like BB and ZSZ codes, we note that all known examples with self-correction exhibit small-set expansion in their Tanner graphs \cite{Bleher_1995, Kenyon_2001, Montanari_2006, thermal_LDPC, gamarnik2024slow,rakovszky2024bottlenecks}.

We probabilistically search through 3-term polynomials in $\mathbb{F}_2[\mathbb{Z}_\ell \rtimes_q \mathbb{Z}_m]$ to construct both $A$ and $B$ according to the 2BGA prescription described in Section \ref{sec:2BGA}. The resulting ZSZ codes will hence have $n=2\ell m$ data qubits with $\ell m$ $X$-checks and $\ell m$ $Z$-checks. Since the total number of CSS parity checks is $n$, the existence of logical qubits will be based on linear dependencies amongst the parity checks; for any CSS code, we generically have $k = n - \mathrm{rank}(H_X) - \mathrm{rank}(H_Z)$. We estimate the minimum distance using the \textsf{QDistRnd} package in GAP \cite{QDistRnd}, which probabilistically searches for low-weight logical operators. The most promising codes that we found from our computer search are listed in Table \ref{tab:ZSZ codes}.   In this numerical search, we did not constrain the power $q$ to be small.  As we will discuss in Section \ref{sec:neutral atom implementation}, this will make it more difficult to realize the optimized code in experiments.

\begin{figure}[t]
    \centering
    \includegraphics[width=\textwidth]{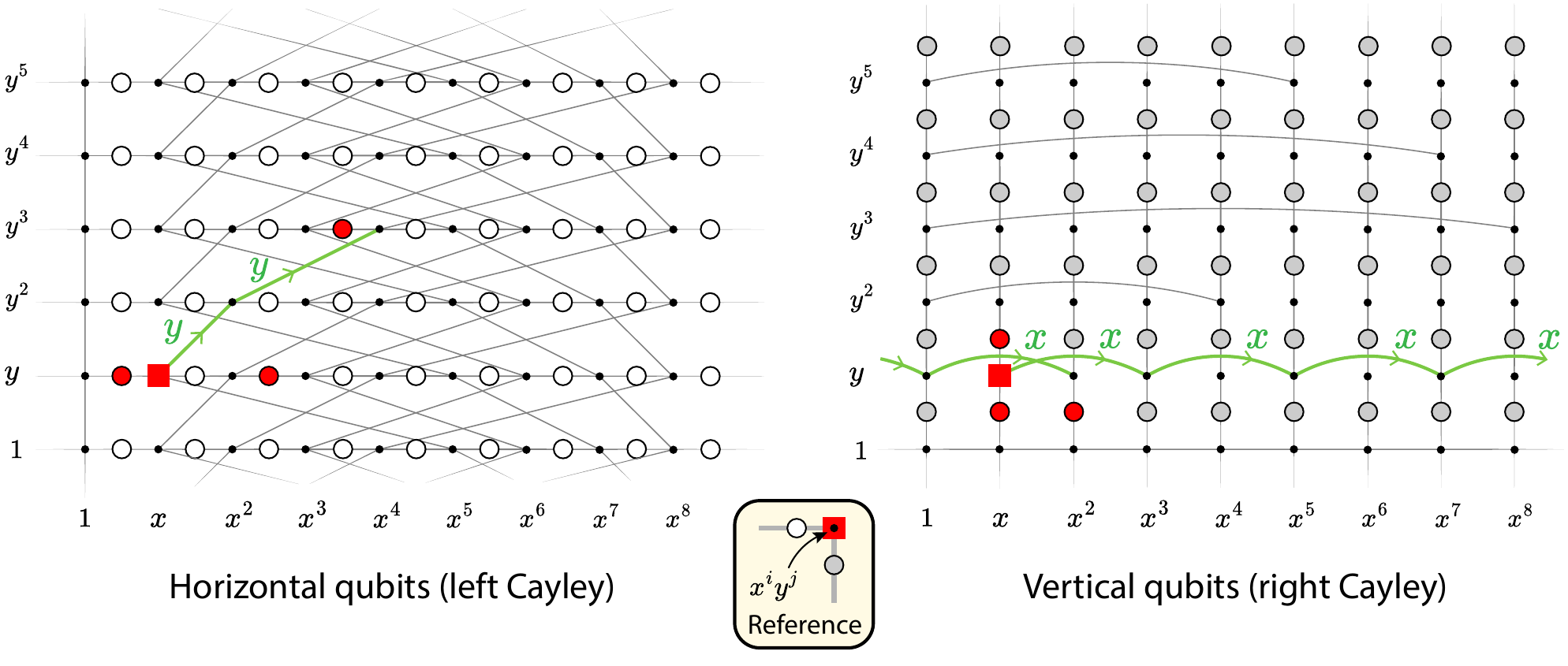}
    \caption{The underlying structure of ZSZ$(9,6,2;a,b)$ with polynomials $a=1+x^2+y^2$ and $b=1+x^5+y$ is depicted. Horizontal qubits (white circles) are indexed by the coordinates of the black dots to their right, while vertical qubits (gray circles) are indexed by the coordinates of the black dots above them. \textbf{Left:} The horizontal support (red circles) of an $X$-check (red square) at location $xy$ is determined by walks on the left Cayley graph (gray lines) according to the monomials in $a$; the path associated with $y^2$ is explicitly drawn in green. \textbf{Right:} The vertical support follows similarly but according to the right Cayley graph and the monomials in $b$. A single horizontal edge is drawn within each row, and the other edges are given by its translations. For $Z$-checks (not shown), the roles of horizontal and vertical qubits are exchanged.}
    \label{fig:X-check Cayley graphs}
\end{figure}

\begin{table}[t]
\centering\renewcommand{\arraystretch}{1.5}
\begin{tabular}{@{}c|c|c|c|c|c@{}}
\hline
Decoding & Name & $\llbracket n,k,d \rrbracket$ & $\ell,m,q$  & $A$ & $B$  \\
\hline
\multirow{8}{*}{$d$ rounds}
    & ZSZ80 & $\llbracket 80,2,\leq10 \rrbracket$ & 5,8,2 & $1+x^4y^4+x^4y$ & $1+x^3+x^2y^7$ \\
    & ZSZ108 & $\llbracket 108,2,\leq12 \rrbracket$ & 3,18,2 & $1+x+y^3$ & $1+xy+x^2y^{12}$ \\
    & ZSZ160 & $\llbracket 160,2,\leq16 \rrbracket$ & 5,16,2 & $1+y^9+x^4y^{14}$ & $1+xy^{13}+x^2y^{13}$ \\
    & ZSZ180 & $\llbracket 180,2,\leq18 \rrbracket$ & 3,30,2 & $1+x^2+y^3$ & $1+xy^{18}+xy^{19}$ \\
    &  ZSZ162  & $\llbracket 162,8,\leq 10 \rrbracket$                             &   27,3,10  & $x^8+x^{18}y+x^{25}y^2$                    & $x^{21}+x^{14}y+x^{16}y^2$ \\  
    & \; ZSZ288-1 \; & $\llbracket 288,12,\leq 16 \rrbracket$                             &  24,6,5    & $xy+x^3y^2+x^5y^3$                    & \;$x^{10}y^3+x^{23}y^4+x^{21}y^4$\;  \\
    & ZSZ360-1 & $\llbracket 360,16,\leq20 \rrbracket$ & 30,6,19 & $1+x^8y^2+x^{26}y^4$ & $1+x^{21}y^4+x^5y^5$ \\
    & ZSZ360-2 & \;$\llbracket 360,20,\leq20 \rrbracket$\; & 30,6,19 & $1+x^9y+x^{26}y^4$ & $1+x^4y^5+x^{16}y^3$ \\               
\hline
\multirow{5}{*}{passive}  
& ZSZ144-3 & $\llbracket 144,12,\leq8 \rrbracket$ & 12,6,5 & $y+x^{2}y^2+x^8y^{3}$ & $x^4+x^6y^4+xy^5$ \\
& ZSZ288-2 & $\llbracket 288,12,\leq 8 \rrbracket$ & 24,6,5 & $1+x^{14}+x^3y^{3}$ & $1+y+x^{8}y^5$ \\
& ZSZ360-3 & $\llbracket 360,12,\leq20 \rrbracket$ & 30,6,11 & \;$x^{12}y+x^{21}y^4+x^4y^{5}$\; & \;$x^4y^3+x^{20}y^3+x^{22}y^4$\; \\
& ZSZ540 & $\llbracket 540,16,\leq 12 \rrbracket$ & 45,6,19 & $1+x^{34}+x^4y^{1}$ & $1+x^{38}+x^{2}y^5$ \\
& ZSZ756 & $\llbracket 756,24,\leq 20 \rrbracket$ & \; 42,9,25 \; & $1+x^{4}y^4+x^6y^{3}$ & $1+x^{36}y^5+x^{21}y^7$ \\
\hline
\end{tabular}
\caption{Candidate ZSZ codes and their parameters and displayed, sorted accordingly to numerical benchmarking against various error models and decoders. Code distances are numerically estimated using the GAP package \textsf{QDistRnd} \cite{QDistRnd}.}
\label{tab:ZSZ codes}
\end{table}


\section{Numerical simulations}\label{sec:numerics}

In this section, we present several instances of ZSZ codes and numerically simulate quantum memory experiments under circuit-level depolarizing noise with three types of decoders. For all the numerical simulations, we execute the following protocol:
\begin{enumerate}
    \item Noiseless preparation of data qubits in $\ket{0}^{\otimes n}$
    \item Multiple rounds of single-ancilla syndrome extraction, alternating between $X$-type and $Z$-type checks
    \item Transversal measurement of all data qubits in the $Z$ basis
\end{enumerate}
For the purposes of our memory simulations, step 1 can also be interpreted as starting in the logical $\ket{\overline{0}}^{\otimes k}$ state. The number of syndrome extraction cycles in step 2 is variable and will depend on the type of memory experiment that we are trying to simulate. For each syndrome extraction cycle, we sequentially extract first the $X$-syndrome and then the $Z$-syndrome; specially scheduled syndrome extraction circuits may further reduce logical error rates than what we report. For step 3, a final noiseless $Z$-syndrome can be inferred from the data qubit measurement outcomes that will be used to return the system exactly to the codespace. Logical $\overline{Z}$ measurement outcomes on all logical qubits can then be read off from the corrected data qubit measurement outcomes. We declare success if all logical $\overline{Z}$ measurement outcomes are +1 and failure otherwise.

\subsection{Noise models and decoders}

Our circuit-level noise model is parameterized by a single physical noise strength $p$. For simulation efficiency, we restrict to local depolarizing noise, which is a standard benchmark for preliminary analyses of new codes. Any qubit undergoing a single-qubit gate, including idling, experiences single-qubit depolarizing noise with probability $p/10$: one of the three Paulis is applied randomly. Two qubits undergoing a two-qubit gate experience two-qubit depolarization with probability $p$: one of the fifteen two-qubit Paulis is applied randomly. Ancilla qubits are incorrectly measured and incorrectly reset with probability $p$. This noise model closely resembles the actual physical noise observed in recent hardware experiments involving trapped ions \cite{Quantinuum_H2_2024} and neutral atoms \cite{Evered_2023}. We use the Python package \textsf{Stim} \cite{gidney2021stim} to perform all circuit-level noise simulations in this work.

The first simulation that we consider involves $d$ syndrome extraction cycles, followed by global decoding of all $d+1$ measured $Z$-syndromes, including the final noiseless syndrome. We employ a belief-propagation and ordered-statistics decoder (BP+OSD) \cite{Roffe_LDPC_Python_tools_2022}, which performs local message-passing on a decoding (hyper)graph before inferring a correction \cite{Panteleev_2021_OSD}. BP+OSD has been previously demonstrated to achieve good performance across a breadth of qLDPC codes \cite{roffe_decoding_2020, Higgott_2023, BB_codes}, and as such is widely expected to be a general-purpose decoder for all qLDPC codes. We construct our decoding graph by laying out $d+1$ copies of the code's $Z$-Tanner graph and inserting ``detector'' qubits between equivalent syndrome nodes, which can also be interpreted as taking a homological product with a length-($d+1$) repetition code. For simulation efficiency, we use this ``phenomenological'' decoding graph rather than the larger circuit-level detector graph outputted by \textsf{Stim}, in essence trading some accuracy for speed. For decoding, we configure BP+OSD with 1000 maximum iterations of ``min-sum'' BP followed by ``combination-sweep'' OSD with search depth 5.

The second simulation that we consider involves $\geq 100$ syndrome extraction cycles, using a single-shot, local ``greedy'' decoder inspired by Glauber dynamics for self-correcting memories. In the classical coding literature, this greedy decoder is more commonly known as the ``flip'' decoder \cite{Gallager_1962, Sipser_1996}: local $X$-corrections are greedily applied one qubit at a time to lower the $Z$-syndrome weight. From the physics perspective, the greedy decoder can be viewed as a zero-temperature Gibbs sampler that locally tries to minimize the energy (syndrome weight). Noise that corrupts the output of the greedy decoder, such as an incorrect input due to syndrome measurement errors or the decoding algorithm being imperfect itself, can all be accounted for by ``raising the temperature''; see Appendix \ref{app:glauber} for details. For certain classical and quantum codes, one can leverage the slow mixing time of low-temperature Gibbs sampling \cite{levin2008markov, gamarnik2024slow,rakovszky2024bottlenecks} to bound the performance of the greedy decoder \cite{Thomas1989, Dennis_2002, Alicki_2010, thermal_LDPC}. In addition, if the code is LDPC, one can ``sweep'' through all data qubits in constant time. After every syndrome extraction cycle, we sweep through all data qubits once and apply the greedy decoder to the $Z$-syndromes for each qubit, corrupting its output with probability $p$: if the decoder outputs an $X$-correction, we do not apply it with probability $p$ and vice versa. This last error is to approximate the situation where we perform ``passive'' (measurement-free) error correction by implementing one sweep of the greedy decoder as a constant-depth noisy circuit within the quantum computer itself\footnote{The greedy decoder involves a majority function that cannot be implemented by a Clifford circuit and so falls outside of the simulation regime of \textsf{Stim}. Thus, we stick to this simpler ``phenomenological'' error model for the decoding step.}. Details regarding passive error correction and its implementation can be found in Appendix \ref{app:passive decoder}. After the final transversal $Z$-measurement, we decode the final syndrome with belief-propagation and localized statistics decoding (BP+LSD), a variant of BP+OSD that reduces the runtime of OSD for large LDPC codes by leveraging the percolation structure of local errors \cite{hillmann2024lsd}. We configure BP+LSD with 1000 maximum iterations of ``min-sum'' BP followed by order-5 ``combination sweep'' LSD. 

Both simulations do not make use of the possibility for erasure checks --- the position-resolved detection of qubit leakage due to physical errors --- as has been demonstrated in super-conducting qubits and neutral-atom arrays~\cite{wu2022erasure,teoh2023dual, kubica2023erasure, ma2023high, scholl2023erasure, levine2024demonstrating, senoo2025high, zhang2025leveraging, bluvstein2025architectural}. Numerical studies have affirmed that the information garnered from erasures can significantly improve thresholds, particularly when detected mid-circuit and even as a ``delayed'' erasure~\cite{grassl1997codes, wu2022erasure, sahay2023high, baranes2025leveraging}. While the ZSZ code already localizes its information to a degree to enable passive decoding, we expect that further gains in performance should be achievable with the use of erasures. We leave these numerical investigations to future work. 

\subsection{Performance}

The simulation results for $d$ rounds of syndrome extractions and global decoding are displayed in Figure \ref{fig:d rounds BPOSD}. We plot both the logical block error rate (BLER), in other words the probability that any logical qubit is measured incorrectly, as well as the logical error rate for any single logical qubit. For a code block encoding $k$ logical qubits with BLER $\overline{p}$, we estimate the single-logical-qubit error rate as $\overline{p}_1 = 1 - (1-\overline{p})^{1/k}$. We also simulate the performance of the $\llbracket 144,12,12 \rrbracket$ (BB144) and $\llbracket 288,12,18 \rrbracket$ (BB288) BB codes from \cite{BB_codes} under the same BP+OSD decoder as well as several (unrotated) 2D surface codes decoded with minimum-weight perfect-matching (MWPM), using the \textsf{PyMatching} package in Python.

\begin{figure}[t]
\centering
\includegraphics[width=0.48\textwidth]{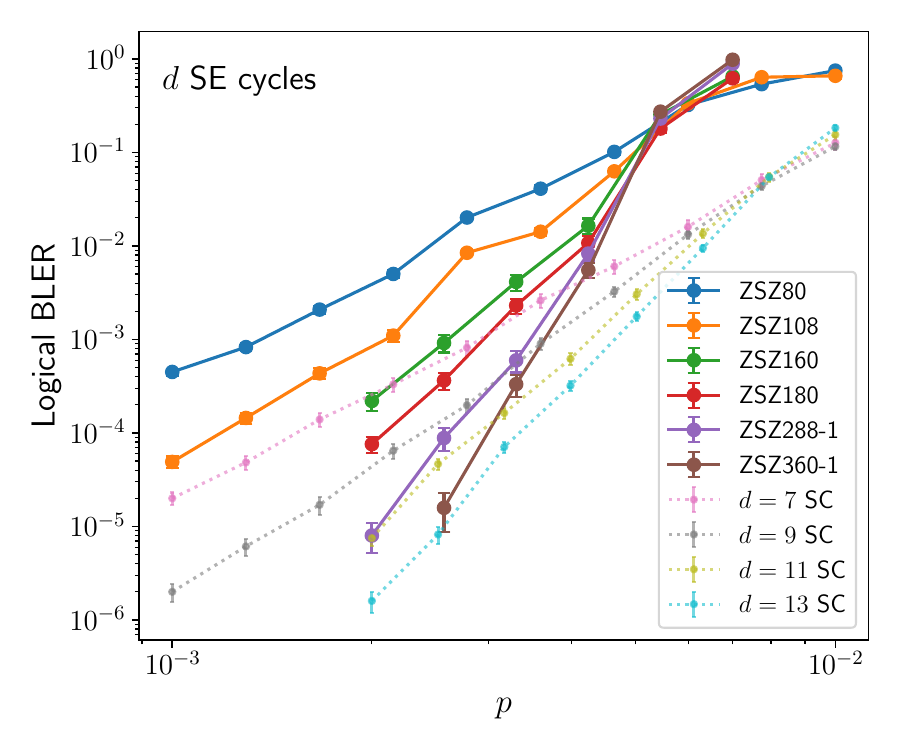}\hfill
\includegraphics[width=0.48\textwidth]{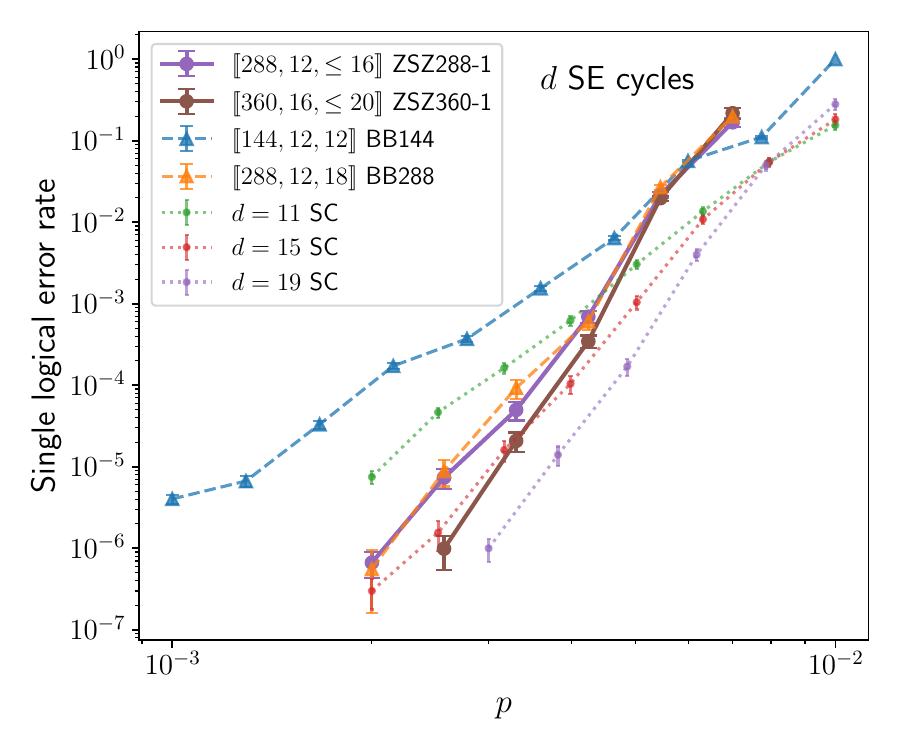}
\caption{Numerical simulation results are displayed for global decoding over $d$ syndrome extraction (SE) cycles. \textbf{Left:} The logical block error rate (BLER) is plotted as a function of the physical noise strength $p$ for a handful of ZSZ codes in Table \ref{tab:ZSZ codes} under ``$d$ rounds'' decoding. For comparison, we also plot the performance of the (unrotated) surface code (SC) (dotted curves). \textbf{Right:} The single logical error rate is plotted as a function of $p$ for two high-rate ZSZ codes. For comparison, we also plot the performance of the surface code (dotted) as well as two BB codes (dashed) from \cite{BB_codes}. Uncertainties are given by the standard error.}
\label{fig:d rounds BPOSD}
\end{figure}

We observe an intersection of the ZSZ BLER curves around $p \approx 0.5\%$, below which the BLER falls exponentially with the code size. This exponential decay is indicative of subthreshold behavior where we would generically expect $\overline{p} \sim p^{-\mathrm{\Theta}(d)}$. Thus, we optimistically declare $p_\mathrm{th} \approx 0.5\%$ as the threshold of ZSZ codes under our choice of decoder and circuit-level noise model. For comparison, the threshold of the 2D surface code with MWPM decoding under the same noise model is observed to be approximately $0.8\%$\footnote{We note that these thresholds can be slightly improved by using depth-optimized syndrome extraction circuits (we sequentially extract the $X$-syndrome and then the $Z$-syndrome).}. Interestingly enough, we also see that some higher-distance ZSZ codes, such as ZSZ180 with estimated code distance 18, have a worse BLER performance than lower-distance ZSZ codes, such as ZSZ288-1 with estimated code distance 16. We attribute this discrepancy to ancillary hook errors during syndrome extraction: a single fault on the ancilla can propagate to multiple data qubits and lower the total \emph{fault distance} of the code. For LDPC codes, this propagation is a constant and so is unimportant from an asymptotic standpoint, but for small-to-moderate length codes this constant can matter. Our ZSZ codes have check weight 6, and so the weight of any hook error is at most 3; see Appendix \ref{app:syndrome extraction} for more details. Note that (unrotated) surface codes do not suffer from this issue \cite{Fowler_2012}. When we normalize by the number of logical qubits and plot the single logical error rate, we notice that the gap between the LER curves of the high-rate ZSZ codes (ZSZ288-1 and ZSZ360-1) and the surface codes closes. In particular, the subthreshold LER slope of ZSZ288-1 closely follows that of BB288 as well as the distance-15 surface code, and the LER slope of ZSZ360-1 closely follows that of the distance-19 surface code. This last observation suggests that ZSZ360-1, under our naive syndrome extraction schedule, achieves comparable performance to a surface code of similar length and distance while encoding 16$\times$ more logical qubits.

\begin{figure}[t]
\centering
\includegraphics[width=0.48\textwidth]{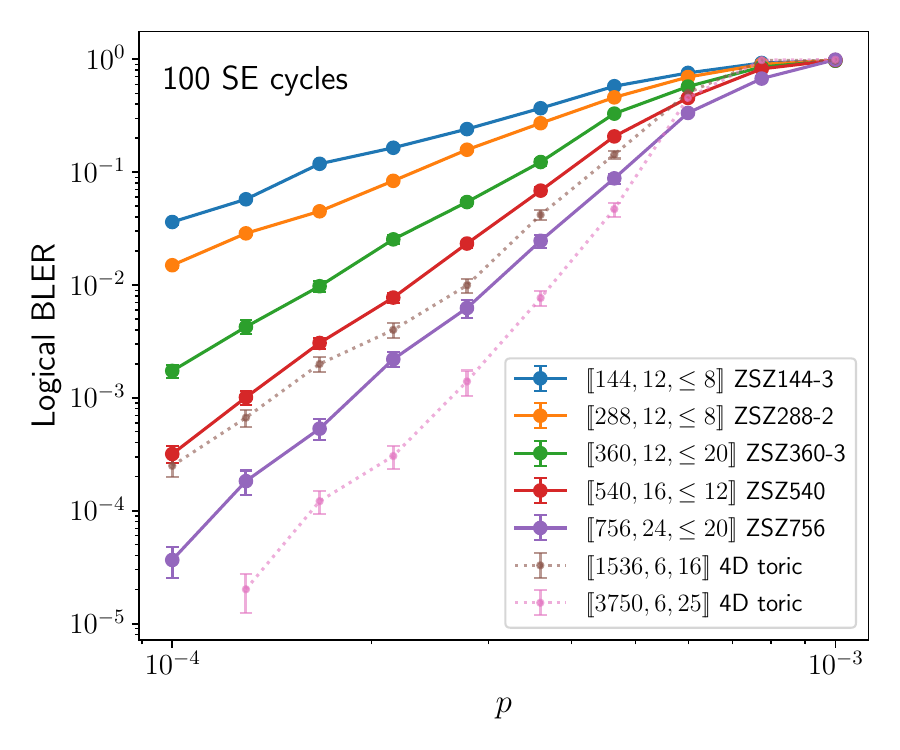}\hfill
\includegraphics[width=0.48\textwidth]{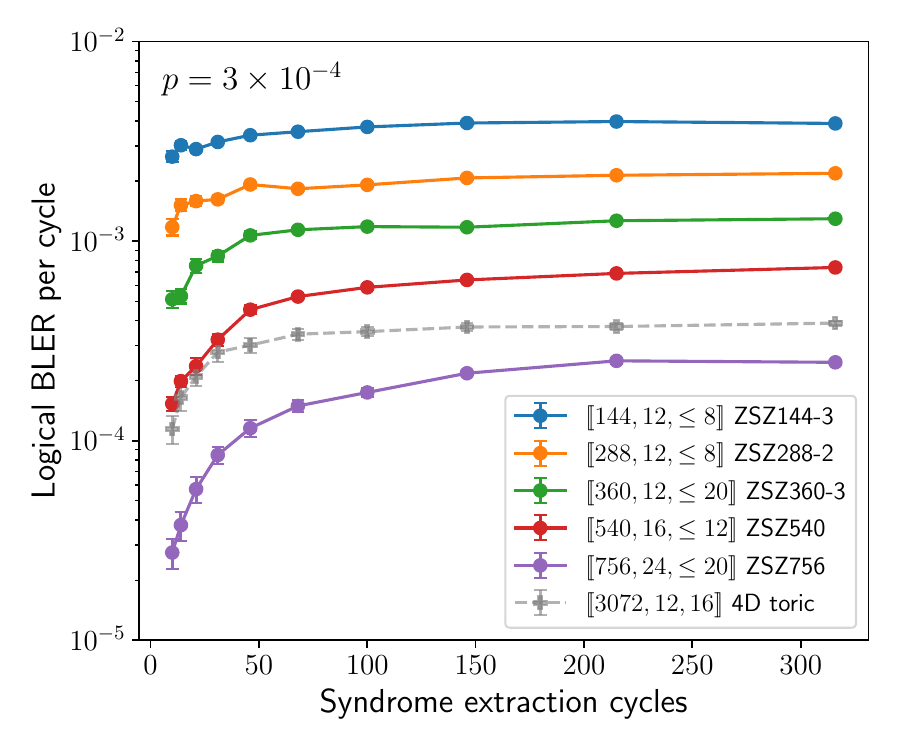}
\caption{Numerical simulation results are displayed for noisy local (passive) decoding over many syndrome extraction (SE) cycles. \textbf{Left:} The logical block error rate (BLER) over 100 SE cycles is plotted as a function of the physical noise strength $p$ for the ZSZ codes in Table \ref{tab:ZSZ codes} under passive decoding. \textbf{Right:} The logical BLER per SE cycle with fixed $p=3\times 10^{-4}$ is plotted as a function of the total number of SE cycles. For comparison, we also plot the performance of the 4D toric code with linear sizes $L=4,5$. Uncertainties are given by the standard error.}
\label{fig:passive decoding}
\end{figure}

The simulation results for $\geq 100$ rounds of syndrome extraction and passive decoding are displayed in Figure \ref{fig:passive decoding}. Similar to the previous situation with $d$ rounds of decoding, we observe subthreshold behavior below a noise strength $p\approx0.1\%$ when decoding a cumulative of 100 syndrome extraction cycles. A more detailed numerical analysis near the observed crossing point can be found in Appendix \ref{app:passive threshold}. In order to rule out finite-size effects, we also fix the noise strength $p$ and vary the number of syndrome extraction cycles. We see that the normalized logical error rates \emph{per cycle} for our ZSZ codes stabilize at $\gtrsim 100$ syndrome extraction cycles, suggesting that our numerics at 100 cycles is a good indicator of the asymptotic behavior at long times. As such, we optimistically declare $p_\mathrm{th} \approx 0.1\%$ as the \emph{sustainable} threshold of ZSZ codes under our model of passive decoding. For comparison, we perform the same simulations for the 4D toric code and estimate its sustainable threshold under passive decoding to be approximately $0.06\%$; see Appendix \ref{app:passive threshold} for details. In particular, we see that ZSZ756 outperforms four copies of the $L=4$ 4D toric code ($\llbracket 6144,24,16 \rrbracket$) while using only $\approx 12\%$ of the number of data qubits for the same number of logical qubits. We also simulate several instances of BB codes and do not observe any evidence of a sustainable threshold for passive decoding, suggesting that ZSZ codes and BB codes can have drastically different qualitative behaviors at moderate code lengths; see Appendix \ref{app:passive BB} for details on the BB code simulations.


\section{Implementation in neutral atom arrays} \label{sec:neutral atom implementation}

In this section we discuss the implementation of ZSZ codes on the neutral atom platform. In this platform, each physical qubit is encoded in two long-lived states of a neutral atom \cite{Saffman2010, Kaufman_2021, Bluvstein_2021, Cong_2022, Bluvstein_2022, Jenkins_2022}. The atoms are optically trapped by a Spatial Light Modulator (SLM), and their positions can be rearranged using Acousto-Optic Deflector (AOD) optical tweezers \cite{Beugnon_2007, Bluvstein_2022, bluvstein2024logical}. Typically, an SLM is used to generate a static two-dimensional array of traps, and AODs are used to transfer atoms between different SLM traps to facilitate long-range gates~\cite{bluvstein2024logical}. A typical AOD reconfiguration step, or ``grid transfer'', consists of picking up a subgrid of qubits from static SLM traps onto dynamical AOD traps, moving the AOD traps to unoccupied SLM traps and then dropping off the atoms onto these SLM traps. Importantly, parallel AOD rows and columns should not intersect to minimize disturbance of static qubits; in other words, the ordering of atoms participating in an AOD move should remain unchanged when moves are constrained to a line. For example, Figure \ref{fig:1D riffle shuffle} illustrates how a ``riffle shuffle'' permutation of atoms arranged in a 1D line can be executed with a constant number of AOD moves and auxiliary SLM traps. This riffle shuffle is a key subroutine of the 1D permutation algorithm by Xu \emph{et al.} \cite{Xu_2024_constant}, which can implement any 1D permutation of $N$ atoms using $\log_2 N$ riffle shuffles and $\lfloor N/2 \rfloor$ additional SLMs for scratch space.

\begin{figure}[t]
    \centering
    \includegraphics[width=0.6\textwidth]{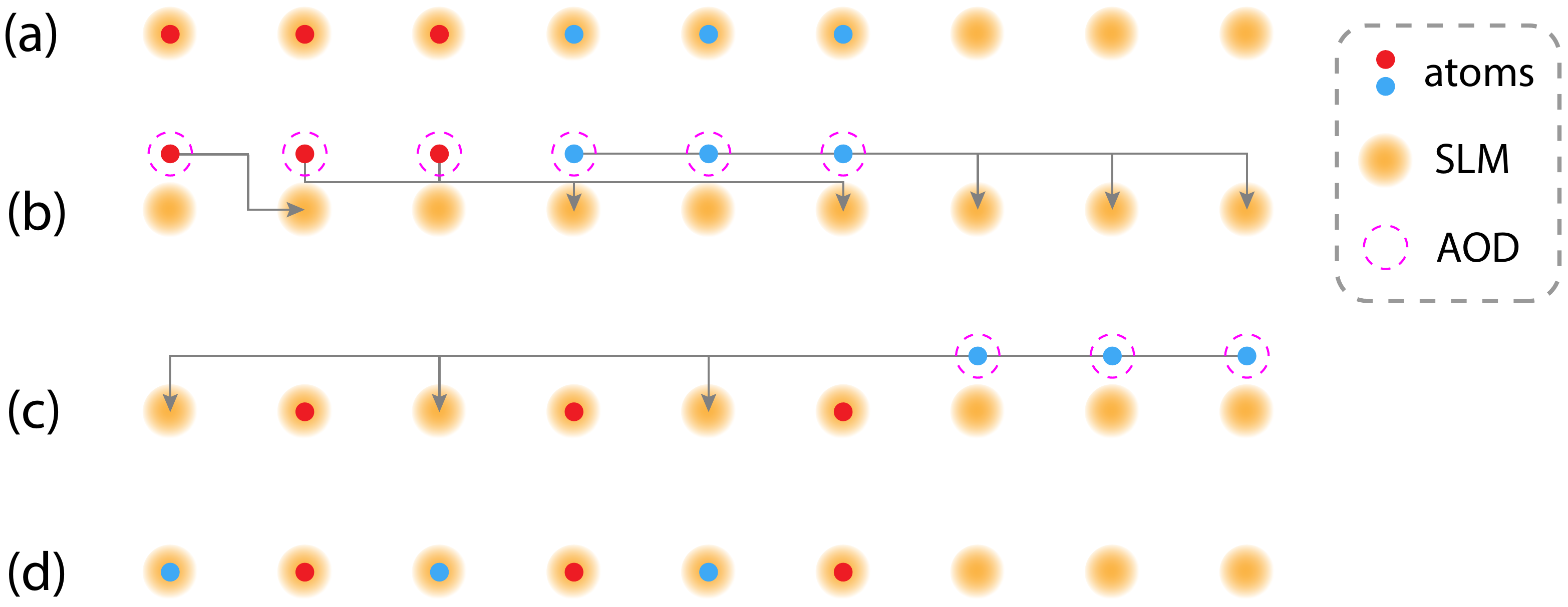}
    \caption{A step-by-step visualization of a 1D riffle shuffle of $N$ atoms (6 shown) using $\lfloor N/2 \rfloor$ auxiliary SLM traps. \textbf{(a)} The initial configuration of atoms. \textbf{(b)} A subset of atoms (blue) is transferred by AODs to the auxiliary SLM traps, and the remaining atoms (red) are moved to some order-preserving subset of the original $N$ sites, in this case the even sites. \textbf{(c)} The blue atoms are then transferred back to the remaining unoccupied SLM traps of the initial $N$ sites. \textbf{(d)} The final configuration of the atoms is displayed.}
    \label{fig:1D riffle shuffle}
\end{figure}

Recall that the semidirect product structure of a ZSZ code naturally gives rise to an $\ell \times m$ rectangular embedding (Figure \ref{fig:X-check Cayley graphs}) where the horizontal axis labels the group elements of the normal subgroup $\mathbb{Z}_\ell$ and the vertical axis those of $\mathbb{Z}_m$. We propose a single-ancilla syndrome extraction strategy, where we initialize a syndrome qubit for every parity check and couple the syndrome qubit to its corresponding data qubits; see Appendix \ref{app:syndrome extraction} for further details. After data-syndrome coupling, we can then either measure the syndrome qubits to extract the classical syndrome bitstring or further process the syndrome qubits for passive decoding. We will now show that a complete round of $X$-syndrome extraction can be performed using $O(\log \ell)$ AOD moves for the horizontal sector and $O(m)$ moves for the vertical sector. The $Z$-syndrome extraction proceeds analogously but using the transpose matrices (i.e. negating the exponents) as well as interchanging the roles of the left and right Cayley graphs, with the same routing complexity.

To extract the $X$-syndrome, we proceed in two steps. First, we couple the syndrome and horizontal data qubits given by the polynomial $a$, which corresponds to routing on the ZSZ group's left Cayley graph. For each monomial $x^\alpha y^\beta$ in $a$, we can implement the $y^\beta$ part using a combination of a vertical cyclic shift of all rows and a 1D horizontal ``grid-type'' permutation of all columns. The number of AOD moves required for this part is generically $O(\log\ell)$, stemming from the 1D permutation complexity. The $x^\alpha$ part can be implemented with a simple horizontal cyclic shift; see Algorithm \ref{alg:ZSZ left routing} for the explicit steps and Figure \ref{fig:implementation}a for a visualization. Hence, the total routing complexity to couple the horizontal data qubits is $O(\log\ell)$.
Second, we couple the syndrome and vertical data qubits given by the polynomial $b$. For each monomial $x^\alpha y^\beta$ in $b$, we can implement the $x^\alpha$ part using horizontal cyclic shifts among the rows; each row will correspond to a distinct cyclic shift given by \eqref{eq:push-through x right}. Using grid transfers and $n$ additional SLM traps as scratch space on the right side, we can perform these cyclic shifts as follows. Using AODs, pick up the entire grid of $n$ syndrome qubits and translate it horizontally, dropping off each row into SLM traps according to their respective shift amount. Then, pick up all the atoms in the scratch space and merge it on the left side of the remaining atoms using additional grid transfers. The number of AOD moves required for this part is $O(m)$, since the rows needs to be sequentially dropped off.
Finally, we can implement the $y^\beta$ part of $b$ with a global vertical cyclic shift. See Algorithm \ref{alg:ZSZ right routing} for the explicit steps and Figure \ref{fig:implementation}b for a visualization.

\begin{algorithm}
    \DontPrintSemicolon
    \caption{ZSZ left-action routing with grid transfers}\label{alg:ZSZ left routing}
    \Input{ZSZ parameters $(\ell,m,q)$ and a polynomial $x^\alpha y^\beta$}
    \Return{A permutation of an $\ell \times m$ rectangular lattice of sites}
    \tcp{$y^\beta$ part, requiring $\lfloor m/2 \rfloor$ additional rows above and $\lfloor \ell/2 \rfloor$ columns on the right for scratch space}
    $\sigma = [q^\beta i \;\%\;\ell \;\textbf{ for }\; i=0 \textbf{ to } \ell-1]$ \;
    Apply a suitable 1D permutation algorithm (e.g. Algorithm 1 of \cite{Xu_2024_constant}) to implement $\sigma$ on all $\ell$ columns, ordered from left to right. \;
    $\beta' = \min(\beta,m-\beta)$ \;
    \uIf{$\beta'==\beta$}{
        Shift the entire lattice upwards by $\beta'$ units. \;
        Pick up the top $\beta'$ rows of atoms and shift them $m$ units downwards to the bottom.
    }
    \Else{
        Pick up the bottom $\beta'$ rows of atoms and shift them $m$ units upwards to the top. \;
        Shift the entire lattice downwards by $\beta'$ units.
    }
    \tcp{$x^\alpha$ part, requiring $\lfloor \ell/2 \rfloor$ additional columns for scratch space}
    $\alpha' = \min(\alpha,\ell-\alpha)$ \;
    \uIf{$\alpha'==\alpha$}{
        Shift the entire lattice to the right by $\alpha'$ units. \;
        Pick up the rightmost $\alpha'$ columns of atoms and shift them $\ell$ units to the left side.
    }
    \Else{
        Pick up the leftmost $\alpha'$ columns of atoms and shift them $\ell$ units to the right side. \;
        Shift the entire lattice to the left by $\alpha'$ units.
    }
\end{algorithm}

\begin{algorithm}
    \DontPrintSemicolon
    \caption{ZSZ right-action routing with grid transfers}\label{alg:ZSZ right routing}
    \Input{ZSZ parameters $(\ell,m,q)$ and a polynomial $x^\alpha y^\beta$}
    \Return{A permutation of an $\ell \times m$ rectangular lattice of sites}
    \tcp{$x^\alpha$ part, requiring $\ell$ additional columns on the right for scratch space}
    $S = [q^j\alpha \;\%\;\ell \;\textbf{ for }\; j=0 \textbf{ to } m-1]$ \;
    $S', I = \mathrm{Sort}(S)$ \quad \tcp{Sorted with index set $I$}
    $\mathrm{\Lambda} \leftarrow$ the entire lattice \;
    \For{$i=1$ \KwTo $m-1$}{
        Shift $\mathrm{\Lambda}$ to the right by $S'[i]$ units. \;
        Drop off row $I[i]$. \;
        $\mathrm{\Lambda} \leftarrow$ $\mathrm{\Lambda} - \mathrm{row}\, I[i]$
    }
    Pick up the $\ell$ columns of scratch space and shift them $\ell$ units to the left side. \;
    \tcp{$y^\beta$ part, requiring $\lfloor m/2 \rfloor$ additional rows above for scratch space.}
    $\beta' = \min(\beta,m-\beta)$ \;
    \uIf{$\beta'==\beta$}{
        Shift the entire lattice upwards by $\beta'$ units. \;
        Pick up the top $\beta'$ rows of atoms and shift them $m$ units downwards to the bottom.
    }
    \Else{
        Pick up the bottom $\beta'$ rows of atoms and shift them $m$ units upwards to the top. \;
        Shift the entire lattice downwards by $\beta'$ units.
    }
\end{algorithm}

To summarize, for syndrome extraction of a ZSZ code with $n=2\ell m$ data qubits, we require $O(\log\ell)$ grid transfers for routing on the left Cayley graph and $O(m)$ grid transfers on the right Cayley graph.
For the ZSZ codes with logarithmic diameter, their dimensions satisfy $m=O(\log\ell)$ which gives a routing complexity of $O(\log\ell)$ for the vertical sector, the same as that of the horizontal sector. Note that all the candidate ZSZ codes for passive decoding in Table \ref{tab:ZSZ codes} have small $m$ relative to $\ell$. 
For comparison, a complete round of syndrome extraction has an (AOD) routing complexity of $O(1)$  for BB codes since all the required couplings can be realized in parallel with cyclic shifts \cite{viszlai2024}, similar to the case with the surface code. For generic hypergraph product codes, a complete round of syndrome extraction has $O(\log n)$ routing complexity \cite{Xu_2024_constant}, coming from the 1D permutation complexity of rearranging rows and columns. So for ZSZ codes with $m = O(\log\ell)$, their routing complexity for syndrome extraction is the same as that for hypergraph product codes, but not as fast as that of BB and surface codes.

Constantinides et al. \cite{constantinides2024routing} proposed a hardware upgrade that allows for ``selective transfers'': within a subgrid formed by AODs, we can select an arbitrary subset of atoms to be left behind during the grid transfer. The process involves deepening the SLM trap potential on the selected sites so that the atoms are not picked up by the AOD tweezers. They show that, using these selective transfers, arbitrary permutations of the sites on a 2D lattice have routing complexity $O(\log n)$. Since their result encompasses arbitrary permutations of the atoms, it applies to the ZSZ codes, and so the routing complexity of syndrome extraction with these selective transfers becomes $O(\log n)$ irrespective of $\ell$ and $m$.

\begin{figure}[t!]
\centering
\subfloat[Applying $x^2 y$ on the left]{%
  \includegraphics[width=1\textwidth]{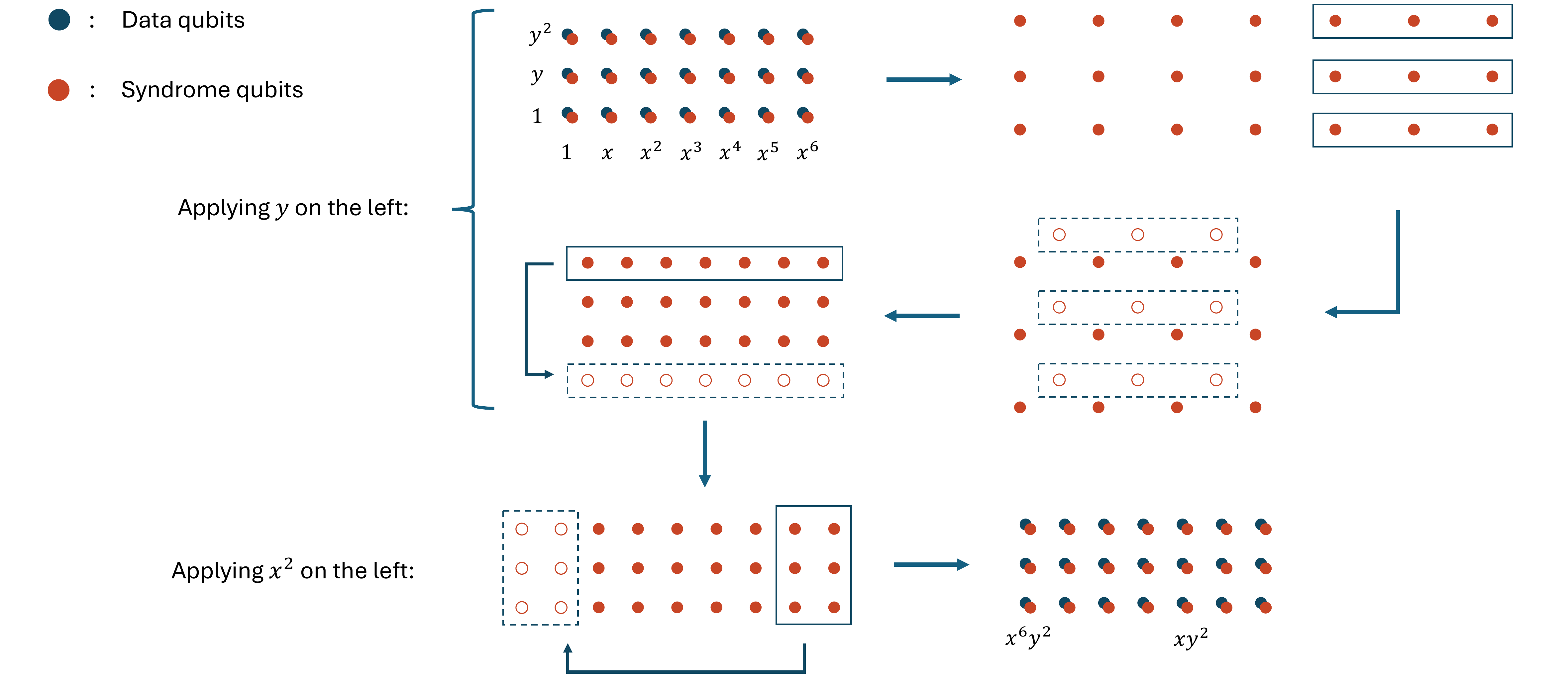}
}
\subfloat[Applying $x^2 y$ on the right]{%
  \includegraphics[width=1\textwidth]{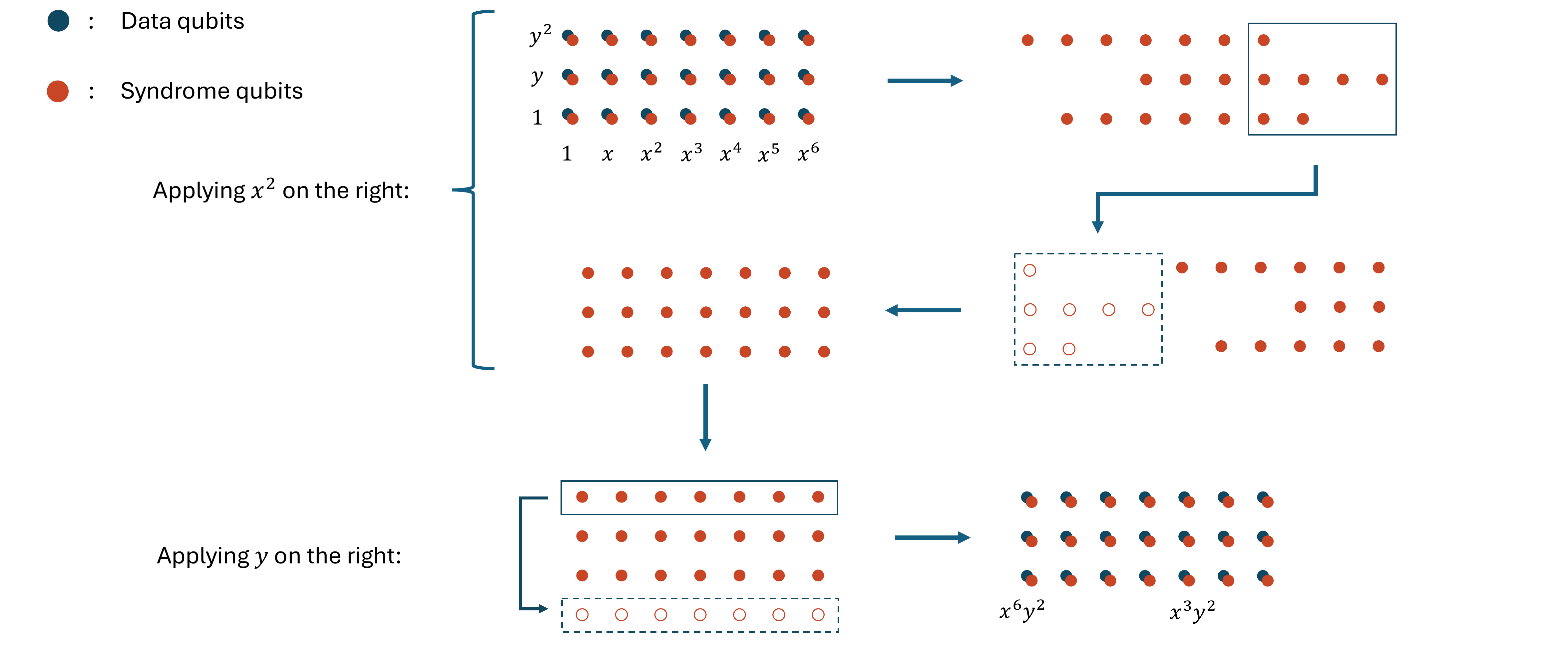}
}
\caption{The AOD moves required for implementing the monomial $x^2y$ on the horizontal data qubits for the group $\mathbb{Z}_7 \rtimes_2 \mathbb{Z}_3$ are depicted. The blue dots represent the horizontal data qubits, and the orange dots represent the syndrome qubits. 
\textbf{(a)} Applying $x^2y \in a$ on the left. The first and second steps illustrate the grid-type permutations (riffle shuffles) required for implementing $y$, and the later steps illustrate the vertical and horizontal cyclic shifts for the rest. In the last frame, the previous locations of two ancilla qubits are labeled. Applying $x^2y$ to their previous locations gives the present locations.
\textbf{(b)} Applying $x^2y \in b$ on the right. The first and second steps illustrate the distance shift for each row for implementing $x^2$. The last row is the horizontal cyclic shift for implementing $y$. The previous locations of two ancilla qubits are labeled.
}
\label{fig:implementation}
\end{figure}


\section{Discussion and outlook} \label{sec:outlook}

In this work, we introduced ZSZ codes as a non-abelian generalization of bivariate bicycle codes. We proposed a two-dimensional rectangular layout that facilitates the movement of ancilla qubits for syndrome extraction in neutral atom platforms. We then simulated memory experiments under a circuit-level depolarizing noise model using both conventional decoding with mid-circuit measurements as well as a local measurement-free scheme inspired by self-correcting memories. We observed numerical evidence of a sustainable threshold to the measurement-free scheme and found that it is higher than that of the 4D toric code under the same noise model and decoder. Finally, we described how to implement a complete round of syndrome extraction in neutral atom platforms with optical tweezer arrays.

There are several directions for future work that could improve both the ZSZ codes' decoding performance as well as their implementation in a fault-tolerant architecture. To start, it would be immediately beneficial if one can parallelize the $X$-syndrome and $Z$-syndrome extraction circuits to reduce the overall depth of syndrome extraction. A lower syndrome extraction depth would mean less noise accumulation for the data qubits due to idling as well as a faster QEC cycle time. It would also be practically relevant to optimize the CZ scheduling in order to mitigate the effects of hook errors, as has been done for the bivariate bicycle codes \cite{BB_codes}. In addition, for the measurement-free simulations, we only used a phenomenological noise model for the local decoding circuit. It would be more accurate to account for circuit-level noise in the local decoder, and this would most likely involve state-vector simulations due to the decoding circuit involving non-Clifford gates; one can also include noise during the atom routing process such as from heating, as has been done in \cite{Xu_2024_constant}. Alongside circuit-level optimizations, it would also be interesting explore other potential local decoding strategies, such as if local cellular automaton decoders for topological codes \cite{Kubica_2019_CA, balasubramanian2025} can be adapted to ZSZ codes. For the 4D toric code, it has been numerically observed that a local sweep rule outperforms the local majority vote in terms of both a higher threshold and lower logical error rates \cite{Breuckmann_2017}.

Although we have benchmarked the memory performance of ZSZ codes, there are several considerations that we have glanced over, which would be crucial to performing encoded logic. First, in our numerical simulations, we have assumed that the logical codespace has already been prepared fault-tolerantly. In an actual platform, we will typically need to prepare the codespace starting from a product state. The usual method for fault-tolerant state preparation of CSS codes works here: to initialize the logical $\ket{\overline{0}}$/$\ket{\overline{+}}$ state, we initialize all data qubits in $\ket{0}$ and perform $d$ rounds of syndrome measurement. Using spacetime mappings, one can also reduce this depth to $O(1)$ at the cost of $O(d)$ additional ancilla qubits \cite{Bergamaschi_2025, hong2024single, hillmann2024} per data qubit. For some single-shot codes such as the 4D toric code, the local dependencies among the check operators enable the constant depth without the additional ancillas \cite{Campbell_2019, Xu_2025_fast}. Since we discuss measurement-free error correction, it would be practical to also have a measurement-free version of state preparation. However, we are not yet aware of such a protocol, and progress along this direction would enable the full memory experiment to be measurement-free, other than the final readout measurement of all data qubits. Slow state preparation is okay if the purpose of the ZSZ code block is solely for memory and not computation, as we would in principle only need to prepare the codespace once at the beginning. However, if we are using state preparation as a subroutine for logical computation (e.g. in \cite{Xu_2025_fast}), then its speed may bottleneck the otherwise fast (single-shot) QEC cycles. In addition to logical state preparation, we would also require fault-tolerant gadgets to perform logical computation. Like for the BB codes, we can construct ancillary systems \cite{williamson2024surgery, swaroop2025adapters, he2025extractors, BB_codes, yoder2025tour} that can measure arbitrary combinations of logical Pauli operators within a code block or between different code blocks to realize the full logical Clifford group \cite{Litinski_2019}. Note that, unlike the BB codes, we cannot leverage translational symmetries (shift automorphisms) to reduce the size of this ancillary system \cite{yoder2025tour}. Furthermore, these ``lattice surgery'' schemes generically require $\mathrm{\Theta}(d)$ rounds of measurements and feedback for fault tolerance\footnote{See \cite{hillmann2024, aasen2025geo, aasen2025topo} for recent progress on single-shot lattice surgery for topological codes.}, and it is not yet known whether a measurement-free version exists.

\section*{Acknowledgments}

We thank Ali Fahimniya for enlightening discussions on neutral atom routing.  This work was supported by the Air Force Office of Scientific Research under Grant FA9550-24-1-0120 (JG, AL), the Office of Naval Research under Grant N00014-23-1-2533 (JG, AMK, AL), the Defense Advanced Research Projects Agency (DARPA) under Agreement No. HR00112490357 (YH), and the DoE ASCR Quantum Testbed Pathfinder program under awards No.~DE-SC0019040 and No.~DE-SC0024220 (YH).

\section*{Code availability}

All source code and data for the numerical simulations are available at \href{https://github.com/yifanhong/ZSZ-codes-numerics}{this GitHub repository}.


\appendix
\renewcommand{\thesubsection}{\thesection.\arabic{subsection}}

\section{Classical group codes} \label{app:group codes}

A classical linear code $\mathcal{C}$ encoding $k$ logical bits amongst $n$ physical bits is described by a $k$-dimensional subspace of $\mathbb{F}^n_2$, the vector space of all length-$n$ binary bitstrings. The $2^k$ bitstrings in this subspace are called logical codewords, and the code distance $d$ is defined as the minimum Hamming weight (i.e. number of ones) amongst all $2^k-1$ nonzero codewords. These three parameters are often packaged using the notation $[n,k,d]$. Because $\mathcal{C}$ is linear, we can choose $k$ codewords to form a basis for this vector subspace, which we can organize into a generator matrix $G \in \mathbb{F}^{k\times n}_2$ that succinctly describes $\mathcal{C}$. Using the dual vector space, we can equivalently describe $\mathcal{C}$ with a parity-check matrix $H \in \mathbb{F}^{m\times n}$ whose rows annihilate those of $G$, and the code is thereby defined as $\mathcal{C} = \ker H$. If the row and column weights of $H$ are bounded by constants independent of $n$, then we say that $H$ (and likewise $\mathcal{C}$) is a low-density parity-check (LDPC) code.

\subsection{Group theory}

We begin by reviewing some relevant group theory knowledge, the details of which can be found in textbooks \cite{dummit_foote}.

\begin{defn}[Cyclic group] The cyclic group $\mathbb{Z}_n$ obeys the presentation $\mathbb{Z}_n = \langle x | x^n = 1\rangle$.  It is an abelian group and, if $n$ is prime, has no nontrivial subgroups.
\end{defn}

\begin{prop} The automorphism group of $\mathbb{Z}_n$, denoted $\mathrm{Aut}(\mathbb{Z}_n)$, which corresponds to the group of all isomorphisms from $\mathbb{Z}_n$ into itself, is isomorphic to $\mathbb{Z}_n^\times$, which is the multiplication group of non-zero integers mod $n$, restricted to the integers coprime to $n$.
\end{prop}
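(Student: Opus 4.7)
The plan is to build an explicit isomorphism by exploiting the fact that any homomorphism out of a cyclic group is determined by where it sends a generator. I fix the generator $x$ of $\mathbb{Z}_n$ and, for each automorphism $\varphi \in \mathrm{Aut}(\mathbb{Z}_n)$, define $k(\varphi) \in \{0,1,\dots,n-1\}$ by the relation $\varphi(x) = x^{k(\varphi)}$. Conversely, for any integer $k$, the assignment $\varphi_k(x) := x^k$ extends uniquely (by $\varphi_k(x^j) = x^{jk}$) to a well-defined endomorphism of $\mathbb{Z}_n$, since $x^n = 1$ forces $x^{nk} = 1$ as well.

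Next I would characterize when $\varphi_k$ is an automorphism. Because $\mathbb{Z}_n$ is finite, $\varphi_k$ is bijective iff it is surjective, which happens iff $x^k$ generates $\mathbb{Z}_n$. A standard fact about cyclic groups gives that $x^k$ is a generator precisely when $\gcd(k,n) = 1$. Hence the set of automorphisms is exactly $\{\varphi_k : k \in \mathbb{Z}_n^\times\}$, and the correspondence $\varphi_k \leftrightarrow k$ is a bijection between $\mathrm{Aut}(\mathbb{Z}_n)$ and $\mathbb{Z}_n^\times$.

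Finally I would verify this bijection is a group homomorphism. Composing, $(\varphi_j \circ \varphi_k)(x) = \varphi_j(x^k) = (x^j)^k = x^{jk}$, so $\varphi_j \circ \varphi_k = \varphi_{jk \bmod n}$, which matches the multiplication in $\mathbb{Z}_n^\times$. Together with the bijectivity established above, this gives the desired isomorphism $\mathrm{Aut}(\mathbb{Z}_n) \cong \mathbb{Z}_n^\times$.

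There is no real obstacle here; the only subtle point is justifying that $\varphi_k$ is well-defined on $\mathbb{Z}_n$ (as opposed to $\mathbb{Z}$) and that surjectivity is equivalent to $\gcd(k,n)=1$, both of which reduce to standard elementary number theory about cyclic groups. Everything else is a direct computation on the generator.
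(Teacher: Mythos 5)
Your argument is correct and is exactly the standard textbook proof that the paper implicitly invokes — the paper states this as a known fact (citing Dummit and Foote) and gives no proof of its own. The three ingredients you use (a homomorphism from a cyclic group is determined by the image of a generator, $x^k$ generates $\mathbb{Z}_n$ iff $\gcd(k,n)=1$, and the composition law $\varphi_j\circ\varphi_k=\varphi_{jk}$ matches multiplication modulo $n$) are precisely what is needed, and no gaps remain.
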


\begin{defn}[Regular representation]
    Given a finite group $G$ of order $\abs{G}=n$, its \emph{left-regular representation} is the set of $n\times n$ permutation matrices $\{L_i\}, i=1,\dots, n$ where the basis runs over the group elements, and $(L_i)_{jk} = 1$ if and only if $g_j = g_i g_k$. The right-regular representation $(R_i)_{jk}$ is defined analogously using the right-multiplication rule $g_j = g_k g_i$.
\end{defn}

As an example, the regular representation of the cyclic group $\mathbb{Z}_n$ consists of circulant matrices $(L_i)_{jk} = \delta_{j,i+k}$.

\begin{defn}[Commutator subgroup]
    For a group $G$, its \emph{commutator subgroup} or \emph{derived subgroup} is defined as
    \begin{align}
        [G,G] := \langle\, [g,h]=g^{-1}h^{-1}gh \;|\; g,h\in G \,\rangle \, .
    \end{align}
\end{defn}

\begin{defn}[Derived series]\label{defn:derived series}
    For a group $G$, its \emph{derived series} is a sequence of groups
    \begin{align}
        G = G^{(0)} \triangleright G^{(1)} \triangleright G^{(2)} \triangleright \cdots
    \end{align}
    where $G^{(i)} := [G^{(i-1)},G^{(i-1)}]$ is normal in $G^{(i-1)}$. The smallest $l$ such that $G^{(l)}=1$ is trivial is known as the \emph{derived length} of $G$.
\end{defn}

As a simple example, the derived length for any abelian group is 1 since the its commutator subgroup is trivial. The notion of derived length will later prove useful when we analyze the expansion of Cayley graphs.

\subsection{Cayley graphs and group-algebra codes}

In this section, we present a geometrical perspective on classical group-algebra codes, which will aid in later arguments.

\begin{defn}[Cayley graph]\label{defn:cayley graph}
    Given a group $G$ and a set of generators $S\subset G$, the \emph{left Cayley graph} $\mathcal{G}(G,S) = (V,E)$ is a directed graph whose vertices $v_i \in V$ correspond to group elements $g_i \in G$, and $(v_i, v_j) \in E$ if and only if $g_j = s g_i$ for some $s \in S$. The \emph{right Cayley graph} is similarly defined but with the condition $g_j = g_i s$ instead.
\end{defn}

Note that if $\ident \in S$, then $G$ contains a self-loop at each vertex.

\begin{defn}[Graph double cover]\label{defn:double cover}
    Given a graph $\mathcal{G}=(V,E)$, the \emph{double cover} of $\mathcal{G}$ is the (possibly directed) bipartite graph $\bar{\mathcal{G}}=(V_1\cup V_2, E_1 \cup E_2)$ = $\mathcal{G} \times K_2$, where $K_2$ is the complete graph on two vertices. Specifically, $V_1$ and $V_2$ are copies of $V$, $(a_i\in V_1, b_j\in V_2) \in E_1$ if and only if $(v_i \in V,v_j \in V)\in E$, and $(b_j\in V_2, a_i\in V_1) \in E_2$ if and only if $(v_j,v_i)\in E$.
\end{defn}

In other words, $V_1$ and $V_2$ partition the left and right vertices respectively. Similarly, $E_1$ and $E_2$ partition the left-emanating and right-emanating edges. If the underlying base graph $\mathcal{G}$ is undirected, then $E_1 = E_2$. Since $\abs{V_1} = \abs{V_2} = \abs{V}$, the double cover $\bar{\mathcal{G}}$ is a balanced graph.

\begin{defn}[Group-algebra code; geometric interpretation]\label{defn:group-algebra code}
    Given a finite group $G$ with $\abs{G}=n$ and a set of generators $S$, the parity-check matrix of the \emph{left group-algebra code} is defined as the biadjacency matrix of $\bar{\mathcal{G}}_2 = (V_1\cup V_2, E_2) \subset \bar{\mathcal{G}}$, the double cover of the left Cayley graph $\mathcal{G}(G,S)$ with only right-emanating edges. Left and right vertices are mapped to bits and checks respectively. The parity-check matrix $H\in\mathbb{F}^{n\times n}_2$ is defined by
    \begin{align}
        H = \sum_{s\in S} L[s] \, ,
    \end{align}
    where $L[s] \in \mathbb{F}^{n\times n}_2$ denotes the left-regular representation of $s$. The right group-algebra code is defined analogously using right-actions.
\end{defn}

Note that any parity-check matrix associated with, say the left-regular representation of, an element of $\mathbb{F}_2[G]$ can be put in the form $H = L'\left(\ident + \sum_i L_i\right)$, which is equivalent to $H' = \ident + \sum_i L_i$ up to the global permutation (automorphism) $L'$. $H'$ is now the parity-check matrix of the Cayley graph code of $\mathcal{G}(G,\ident\cup\{S_i\})$ per the above definitions.

We now review some basic properties of a group-algebra code. Since the left and right degrees of a Cayley graph's double cover are both equal to the number of generators $\abs{S}$, if $\abs{S}$ is constant, then the associated group-algebra code is LDPC. Since the double cover has an equal number of bits and checks, the parity-check matrix of our Cayley graph code is square. Hence, the logical code dimension will be determined by the rank deficiency corresponding to linear dependencies among the parity checks.

\subsection{Classical ZSZ codes}

\begin{defn}[Semidirect product of two cyclic groups]
  Let $\ell,m$ be coprime integers and $q^m =1 \; \text{(mod $\ell$)}$.  The semidirect product of two cyclic groups can be presented as
  \begin{equation}
      \mathbb{Z}_\ell \rtimes_q \mathbb{Z}_m := \langle\, x, y | x^\ell=y^m=yxy^{-1}x^{-q}=1 \,\rangle .
  \end{equation}
\end{defn}

\begin{defn}[Classical ZSZ code]
We define a \emph{classical ZSZ code} to be a group-algebra code (Def. \ref{defn:group-algebra code}) formed from the group $\mathbb{Z}_\ell \rtimes_q \mathbb{Z}_m $ with parity-check matrix described by
\begin{equation}
    H = \sum_{i=1}^q \mathbb{B}\big[ x^{a_i} y^{b_i}\big] \, .
\end{equation}
where $\mathbb{B}[\cdot]$ can denote either the left-regular or right-regular representation.
\end{defn}


\section{Quantum 2BGA codes} \label{app:girth,diameter}

This appendix contains the relevant details for 2BGA code properties mentioned in the main text.

\begin{defn}[Two-block group-algebra (2BGA) code]
    Let $A,B \in \mathbb{F}^{n\times n}_2$ be the parity-check matrices of a left and a right group-algebra code (Def. \ref{defn:group-algebra code}) respectively based on the same group $G$ of order $n$. Then the CSS parity-check matrices of the quantum 2BGA code are given by
    \begin{subequations}\label{eq:2BGA check matrices}
    \begin{align}
        H_X &= \left(\, A \;\big|\; B \,\right)  \\
        H_Z &= \left(\, B^\transpose \;\big|\; A^\transpose \,\right)  \, .
    \end{align}
    \end{subequations}
\end{defn}

From \eqref{eq:2BGA check matrices}, it is clear that the check weights of a quantum 2BGA code are simply the sum of the check weights of the two component classical codes. The orthogonality of $H_X$ and $H_Z$ follows from the commutativity of $A$ and $B$, which follows from the associativity of group multiplication. \eqref{eq:2BGA check matrices} can also be interpreted as a tensor (hypergraph) product between $A$ and $B$ followed by factoring out the ``diagonal'' action $G \times G \rightarrow G$  \cite{Breuckmann_2021_BP}.

\subsection{Girth} \label{app:girth}

In this section, we provide upper bounds on the girths of the Tanner graphs of abelian and nonabelian 2BGA codes. Recall that a parity-check matrix $H$ generates the Tanner graph of the corresponding code. The Tanner graph is a type of bipartite factor graph that depicts how qubit nodes and check nodes are connected. Define the qubit $X$-adjacency ($Z$-adjacency) graphs as a $n$-vertex graph where vertex $i$ and $j$ are connected by an edge if and only if qubits $i$ and $j$ lie in the support of an $X$-check ($Z$-check). 
\begin{defn}[Graph girth]
    Given a simple graph $G=(V,E)$, let $\partial_{ve} \in \mathbb{F}^{\abs{V}\times\abs{E}}_2$ be the vertex-edge incidence matrix such that $\ker\partial_{ve}$ is the space of closed cycles or loops on $G$. The girth of $G$ is then defined as
    \begin{align}
        \mathrm{girth}(G) = \min_{c \in \ker\partial_{ve}} \abs{c} \, .
    \end{align}
\end{defn}
For a classical linear code $\mathcal{C}$ with parity-check matrix $H$, we define its code girth $\mathrm{girth}(H)$ as the girth of the bit adjacency graph corresponding to $H$. For a quantum CSS code, we define its girth to be the minimum girth between the qubit $X$-adjacency and $Z$-adjacency graphs.

Let us decompose the 2BGA block matrices $A$ and $B$ in terms of their components $A = \sum_i A_i$ and $B=\sum_i B_i$. Now let us consider the girth of the qubit $X$-adjacency graph induced by $H_X= \left(\, A \;\big|\; B \,\right)$. $H_X$ naturally divides the vertices into left and right sectors. If we start from a given left vertex, corresponding to some group element, all its left neighbors can be obtained by examining the associated column in $A_i^{\transpose} A_j$ ($i\neq j$) that corresponds to that group element; the right neighbors are obtained similarly but using $B_i^{\transpose} A_j$. Similarly, for right vertices, we examine $A_i^{\transpose} B_j$ and $B_i^{\transpose} B_j$ to get their left and right neighbors respectively.

\begin{thm}[Girth upper bound for abelian 2BGA codes]
    For an abelian 2BGA code with left and right sector weights at least 2, its code girth is at most 3.
\end{thm}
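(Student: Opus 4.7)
The plan is to exhibit a triangle in the qubit $X$-adjacency graph using only the support of a single $X$-check. The approach reduces to a straightforward counting argument, and in fact I suspect the abelianness hypothesis is not strictly needed; the same bound should hold for the nonabelian case by the same reasoning, so stating it in the abelian case seems to be a warm-up before proving that the nonabelian case is also constrained (perhaps by a different constant).

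First I would unpack the definitions. The rows of $H_X = (A \mid B)$ have left-sector weight $|a|$ and right-sector weight $|b|$, both at least $2$ by hypothesis, so each $X$-check has total row-weight at least $4$. Under the regular representation, distinct monomials of $a$ (resp.\ of $b$) contribute distinct left-sector (resp.\ right-sector) qubits to the support of a given check, and the left and right sector labels are disjoint by construction, so each $X$-check in fact has at least $4$ distinct qubits in its support. This is the only place where a small amount of bookkeeping is required.

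Second, I would pick any three distinct qubits $q_1, q_2, q_3$ in the support of a fixed $X$-check. By the definition of the qubit $X$-adjacency graph, two qubits are adjacent whenever they lie in the support of a common $X$-check, so each of the pairs $\{q_1,q_2\}$, $\{q_2,q_3\}$, $\{q_1,q_3\}$ is an edge, and together they form a $3$-cycle. Hence the qubit $X$-adjacency graph has girth at most $3$, and since the code girth is defined as the minimum of the $X$- and $Z$-adjacency girths, the code girth is at most $3$.

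There is no significant obstacle here: the entire content of the proof is the observation that any check of weight $\geq 3$ induces a triangle among its supporting qubits, together with the trivial lower bound $|a|+|b| \geq 4$ coming from the sector-weight hypothesis. If the intended proof is more intricate than this, the likely reason would be that the authors want to exhibit a particular triangle whose edges are indexed by specific generator pairs from $a$ and from $b$ so as to foreshadow a finer structural claim, in which case I would simply replace the arbitrary choice of three qubits by the canonical choice of two left-sector qubits coming from distinct monomials of $a$ and one right-sector qubit coming from a monomial of $b$, all attached to the identity check.
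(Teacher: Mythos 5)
Your argument is correct under the paper's stated definition of the qubit adjacency graph, and it is in fact the \emph{same} triangle the paper constructs: the paper's length-3 path $P = (A_i^\transpose B_k)(B_k^\transpose A_j)(A_j^\transpose A_i)$ starts at a left qubit $v$, and all three of its edges pass through the single check $A_i v$, visiting the three qubits $v$, $A_j^\transpose A_i v$, and $B_k^\transpose A_i v$ in that check's support --- exactly the ``canonical choice'' you describe at the end (two left-sector qubits from distinct monomials of $a$ plus one right-sector qubit from $b$). Your suspicion that abelianness is not needed is also correct, and it applies to the paper's own path: the product telescopes to the identity using only $B_k^{}B_k^\transpose = A_j^{}A_j^\transpose = \ident$, so the appeal to commutativity in the paper's proof is superfluous for this particular walk. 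The one caveat worth registering is that this observation makes the companion theorem (girth at most $4$ for generic 2BGA codes) strictly weaker than what the single-check triangle already gives, which suggests the authors implicitly intend cycles that pass through \emph{distinct} checks (equivalently, cycles of the Tanner graph rather than of the qubit adjacency graph): the length-4 path in the generic proof does traverse four distinct checks, whereas the triangle here does not. Under that stricter reading, neither your argument nor the paper's literal construction would suffice for the abelian case --- one would instead need a nontrivial relation among the generators, such as $b_{k'}b_k^{-1} = a_{i'}a_i^{-1}a_j^{}a_{j'}^{-1}$, to close a triangle through three distinct checks, and such a relation is not guaranteed by the weight hypothesis alone. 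As the theorem and definitions are actually written, however, your proof is valid and coincides with the paper's.
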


\begin{proof}
    For an abelian 2BGA code, recall that all component matrices $A_i$ and $B_i$ commute with each other. Consider the following path of length 3 on the $X$-type qubit adjacency graph, starting from a left vertex:
    \begin{align}
        P = \big(A_i^\transpose B^{}_k\big) \big(B_k^\transpose A^{}_j\big) \big(A_j^\transpose A^{}_i\big) \quad,\quad i \neq j \, .
    \end{align}
    We read the path $P$ above from right to left, and importantly the matrix type ($A$ vs $B$) must be the same when we go between the tuples in the parentheses, corresponding to arriving at a left or right vertex and subsequently leaving from that vertex. Since all the $A$s and $B$s commute, we can rearrange the above expression to get $P = \big(B^\transpose_k B^{}_k\big) \big(A^\transpose_j A^{}_j\big) \big(A^\transpose_i A^{}_i\big) = \ident$, using the fact that each $A$ and $B$ is an orthogonal matrix. Thus $P$ is a loop of length 3, and so the girth is at most 3.
\end{proof}

\begin{thm}[Girth upper bound for generic 2BGA codes]
    For any 2BGA code with left and right sector weights at least 2, its girth is at most 4.
\end{thm}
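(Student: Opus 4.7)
The plan is to explicitly exhibit a 4-cycle in the $X$-qubit adjacency graph. The key property that carries over from the abelian case is that each $A_i$ still commutes with each $B_k$ as a matrix, even when $\mathcal{G}$ is non-abelian: this is just the associativity of the group, since $A_i$ acts by left-multiplication and $B_k$ by right-multiplication. What we \emph{lose} is the commutativity among the $A_i$'s themselves, which is why a cycle of length 3 can no longer be guaranteed and we must go up to length 4.

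By the weight-at-least-2 hypothesis, I would select two distinct monomials $a_{i_1}\neq a_{i_2}$ from $A$ and two distinct monomials $b_{k_1}\neq b_{k_2}$ from $B$, and propose the candidate 4-walk (taking left-block vertex $u_0 = e$ without loss of generality)
\begin{equation*}
u_0 \;\to\; v_1 = a_{i_1} b_{k_1}^{-1} \;\to\; u_2 = b_{k_1}^{-1} b_{k_2} \;\to\; v_3 = a_{i_2} b_{k_1}^{-1} \;\to\; u_0 .
\end{equation*}
To verify that each consecutive pair is an edge, I would exhibit a common $X$-check: $u_0 \leftrightarrow v_1$ share $a_{i_1}$ (via $A_{i_1}$ and $B_{k_1}$); $v_1 \leftrightarrow u_2$ share $a_{i_1} b_{k_1}^{-1} b_{k_2}$ (via $A_{i_1}$ and $B_{k_2}$); $u_2 \leftrightarrow v_3$ share $a_{i_2} b_{k_1}^{-1} b_{k_2}$ (via $A_{i_2}$ and $B_{k_2}$); and $v_3 \leftrightarrow u_0$ share $a_{i_2}$ (via $A_{i_2}$ and $B_{k_1}$). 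Equivalently, the associated matrix product
\begin{equation*}
\bigl(B_{k_1}^\transpose A_{i_2}\bigr)\bigl(A_{i_2}^\transpose B_{k_2}\bigr)\bigl(B_{k_2}^\transpose A_{i_1}\bigr)\bigl(A_{i_1}^\transpose B_{k_1}\bigr)
\end{equation*}
collapses to $B_{k_1}^\transpose B_{k_1} = \ident$ by repeated application of $X X^\transpose = \ident$ for permutation matrices $X$, confirming that the walk closes back at $u_0$. Since the left-block vertices $u_0,u_2$ and right-block vertices $v_1,v_3$ live in disjoint vertex sets, distinctness of all four vertices reduces to $u_0\neq u_2$ (equivalent to $b_{k_1}\neq b_{k_2}$) and $v_1\neq v_3$ (equivalent to $a_{i_1}\neq a_{i_2}$), both secured by the hypothesis.

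The main obstacle is the non-abelian bookkeeping: one must track carefully whether each group element acts by left- or right-multiplication, since factors can no longer be freely rearranged. The length-4 construction is designed precisely so that every nontrivial rearrangement in the matrix product involves commuting an $A$ past a $B$, which is always legal; a naive length-3 attempt in the spirit of the abelian proof would instead demand swapping $A_i$ past $A_j$, which fails in general. Once the bookkeeping is pinned down, the proof is just the direct verification above.
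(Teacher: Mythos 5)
Your proof is correct and is essentially the paper's argument: both exhibit the same nondegenerate $4$-cycle in the $X$-qubit adjacency graph built from the ``rectangle'' of four checks indexed by $\{i_1,i_2\}\times\{k_1,k_2\}$, using two distinct monomials from each sector. The only cosmetic difference is that you group the permutation-matrix factors so the product telescopes via $XX^\transpose=\ident$ (and you explicitly verify vertex distinctness, which the paper leaves implicit), whereas the paper groups them so that the closure follows from the commutation of left-acting $A_jA_i^\transpose$ with right-acting $B_lB_k^\transpose$.
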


\begin{proof}
    Consider the following path of length 4 on the $X$-type qubit adjacency graph, starting from a right vertex:
    \begin{align}
        P = \big(B_l^\transpose A^{}_j\big) \big(A_i^\transpose B^{}_l\big) \big(B_k^\transpose A^{}_i\big) \big(A_j^\transpose B^{}_k\big) \quad,\quad i\neq j\;,\; l\neq k \, .
    \end{align}
    Rearranging the parentheses in the above expression, we get $P = B^\transpose_l \big(A^{}_j A_i^\transpose\big) \big(B^{}_lB_k^\transpose\big) \big(A^{}_iA_j^\transpose\big) B^{}_k$. Now, since $A^{}_j A_i^\transpose$ acts nontrivially solely on the left sector and likewise $B^{}_lB_k^\transpose$ on the right sector, they commute. Hence, upon rearranging, we have $P = B^\transpose_l \big(A^{}_j A_i^\transpose\big) \big(A^{}_iA_j^\transpose\big) \big(B^{}_lB_k^\transpose\big) B^{}_k = B^\transpose_l \big(A^{}_j A_i^\transpose A^{}_iA_j^\transpose\big) \big(B^{}_lB_k^\transpose\big) B^{}_k = B^\transpose_l B^{}_lB_k^\transpose B^{}_k = \ident$. Thus $P$ is a loop of length 4, and so the girth is at most 4.
\end{proof}

\subsection{Diameter} \label{app:diameter}

In this section, we show that the diameters of the Tanner graphs can be quite different between ZSZ codes and abelian 2BGA codes. We prove that if we fix the weight of parity checks $w$, we can construct families of ZSZ codes with diameter $O(\log n)$, where diameter is with respect to the qubit adjacency graphs. In contrast, for any BB code, the diameter scales as $\Omega(n^{1/(w-1)})$.

\begin{thm}[ZSZ Tanner graphs with small diameter]
    Suppose the block matrices $A$ and $B$ have weight 3. If $\ell = q^m-1$ and the component group elements $g_1,g_2,g_3$ obey $g_2 g^{-1}_1 g_3 \neq g_3 g^{-1}_1 g_2$, then the corresponding ZSZ code has diameter $O(\log n)$.
\end{thm}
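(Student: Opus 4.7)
The plan is to reduce the Tanner graph diameter to the diameter of $\mathrm{Cay}(G, S)$, where $G := \mathbb{Z}_\ell \rtimes_q \mathbb{Z}_m$ and $S := \{g_j g_i^{-1} : i \neq j\}$, and then adapt the base-$q$ expansion trick of Eq.~\eqref{eq:BrscalingZSZ} to this generating set. First I would observe that the horizontal qubit adjacency graph is precisely $\mathrm{Cay}(G, S)$: two horizontal qubits at positions $h$ and $h'$ share an $X$-check iff $h = g_i k$ and $h' = g_j k$ for some check position $k$ and some $i \neq j$, i.e.\ $h' h^{-1} \in S$. The vertical sector yields an analogous right Cayley graph built from $b$, and any parity check provides a length-$2$ bridge between the two sectors, so a logarithmic diameter bound on $\mathrm{Cay}(G, S)$ transfers to the full Tanner graph.

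The heart of the argument is to extract two useful primitives from $S$. Set $s_1 := g_2 g_1^{-1}$ and $s_2 := g_3 g_1^{-1}$, both of which lie in $S$. The hypothesis $g_2 g_1^{-1} g_3 \neq g_3 g_1^{-1} g_2$ is equivalent (after right-multiplying by $g_1^{-1}$) to $[s_1, s_2] \neq e$. Since $G/\mathbb{Z}_\ell \cong \mathbb{Z}_m$ is abelian, the derived subgroup $[G,G]$ lies inside $\mathbb{Z}_\ell = \langle x \rangle$, so $[s_1, s_2] = x^c$ for some $c \not\equiv 0 \pmod \ell$, and this commutator is a length-$4$ word in $S$. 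Moreover at least one of $s_1, s_2$, call it $s = x^\alpha y^\beta$, must satisfy $\beta \neq 0$; otherwise both would lie in the abelian subgroup $\mathbb{Z}_\ell$ and would commute. Conjugation by $s^j$ sends $x^c \mapsto x^{q^{j\beta} c}$, which is precisely the role played by $y^j$ in the original base-$q$ identity.

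Given these primitives, I would run the construction around Eq.~\eqref{eq:BrscalingZSZ}. Because $\ell = q^m - 1$, the element $q$ has multiplicative order exactly $m$ modulo $\ell$ (since $q^j - 1 < \ell$ for all $j < m$), so every $J \in \{0,\ldots,\ell-1\}$ admits a base-$q$ digit decomposition $J = \sum_{j=0}^{m-1} q^j J_j$ with $J_j \in \{0,\ldots,q-1\}$. Splicing conjugates of the form $s^j x^{c J_j} s^{-j}$ together and cancelling neighboring $s$-factors produces an $S$-word for every $x^J$ of length $O(mq)$; powers of $y$ itself are reached by $s^k$ and are absorbed into the same word. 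Since $q$ is a fixed constant and $\log n = \mathrm{\Theta}(m \log q)$, this yields $\mathrm{diam}\,\mathrm{Cay}(G,S) = O(\log n)$. The main obstacle I anticipate is the degenerate case in which $\gcd(\beta, m) > 1$ for every element of $S$ with nonzero $y$-exponent, so that conjugations by any single $s$ only reach a proper subgroup of $\mathrm{Aut}(\mathbb{Z}_\ell)$; here I would exploit all six elements of $S$---whose $y$-exponents are $\{\beta_j - \beta_i : i \neq j\}$---and re-apply the commutator construction to a second non-commuting pair in $S$ to supply the missing conjugation exponents, invoking the hypothesis that the $\beta_i$ cannot all be equal (else all of $S$ would lie in the abelian subgroup $\mathbb{Z}_\ell$).
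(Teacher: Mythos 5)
Your main line of argument is essentially the paper's: normalize the check support to the generating set $S$ of differences $g_jg_i^{-1}$, use the non-commutativity hypothesis to extract a nontrivial commutator $x^c\in\mathbb{Z}_\ell$, conjugate it by an element of $S$ with nonzero $y$-exponent to simulate multiplication by powers of $q$, and invoke the base-$q$ expansion enabled by $\ell=q^m-1$ to reach every element of $\mathbb{Z}_\ell$ in $O(m)=O(\log n)$ generator steps. The reduction of the Tanner-graph diameter to $\mathrm{Cay}(G,S)$ plus a short bridge between the two sectors is also how the paper proceeds (the paper phrases the normalization as $s\mapsto s_1^{-1}s$, which yields the same adjacency structure).

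Two steps need more care. First, your spliced word actually produces $x^{c\sum_j q^{j\beta}J_j}$, and the conjugates of $x^c$ generate only $\langle x^{\gcd(c,\ell)}\rangle$; when $\gcd(c,\ell)>1$ the construction reaches a proper subgroup of $\mathbb{Z}_\ell$, so you cannot conclude connectivity of the horizontal sector, let alone small diameter. (The paper makes the analogous coprimality assumption, $\gcd(u,\ell)=1$, explicit and falls back on a disconnected-components argument otherwise.) Second, your proposed rescue of the degenerate case $\gcd(\beta,m)>1$ does not work: every $y$-exponent occurring in $S$ is a difference $\beta_j-\beta_i$, so all of them lie in the subgroup of $\mathbb{Z}_m$ generated by those differences, which can be a proper subgroup $v\mathbb{Z}_m$ even when the $\beta_i$ are not all equal (e.g.\ $\beta$-values $\{0,2,4\}$ with $m=6$). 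Re-applying the commutator construction to a second non-commuting pair of $S$ cannot manufacture conjugation exponents outside $v\mathbb{Z}_m$, so in that case $\mathrm{Cay}(G,S)$ is genuinely disconnected, and the missing connectivity has to be supplied by the other ($B$) sector of the Tanner graph --- which is how the paper closes this case, by bounding the diameter of $H_X$ through the disjoint components of the $A$- and $B$-graphs jointly.
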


\begin{proof}
    We will analyze the case where the Tanner graphs of $A$ and $B$ are connected and deal with disconnected components afterward. In the connected case, the diameter of $H_X$ is less than the sum of the diameters of the graphs of $A$ and $B$. Our goal is to show that the diameters of $A$ and $B$ can be $O(\log n)$. Without loss of generality, we will only discuss the diameter of the graph of $A$.

    Every vertex (qubit) in the left sector can be represented as $x^i y^j$, and all vertices are equivalent to one another due to the transitivity of the group. Let us start with the simplest case where $s=\{1, x, y\}$. We will show that all vertices can be reached from the vertex $x^0 y^0$ in $O(m) = O(\log n)$ steps. In each step, the vertices that can be reached from any initial vertex can be obtained by applying $\{x,x^{-1},y,y^{-1}, x^{-1}y, y^{-1}x\}$. For $q>1$, any positive integer $i<q^m$ can be decomposed as $i=\sum_{a=0}^{m-1} i_a\, q^a$, where $0\leq i_j<q$. Since $yx=x^q y$, we have $x^i y^j = \prod_{a=0}^{m-1} (x^{i_a} y)\, y^{j+1}$, which means that the vertex $x^i y^j$ can be reached in $\sum_a i_a +m+j < \left((q+1)/2+2\right)m$ steps, and hence the diameter is bounded by $\left((q+1)/2+2\right)m = O(\log n)$.
    
    Now consider $s=\{1, x^u, x^{u'} y^v\}$ and $\text{gcd}(u,\ell) = \text{gcd}(v,m)=1$. The latter condition is to ensure that the graph is connected. We then have $(x^{u'}y^v) x^u = x^{u q^v} (x^{u'} y^v)$, which says that this graph is isomorphic to one generated by $s'=\{1, x, y\}$ if we suitably modify $q \rightarrow q' = q^v$. For $s'$ and $q'$, we can decompose $i<q^m$ as $i=\sum_{a=0}^{m-1} i_a\, q^{va\, \text{mod}\, m} = \sum_{a=0}^{m-1} i_a\, q'^a$, and hence the diameter has the same upper bound $\left((q+1)/2+2\right)m = O(\log n)$.
    
    For a more general set of generators $s=\{s_1, s_2 , s_3\}$, we can decompose $s$ as $s = s_1 s'  =s_1  \{1,s_1^{-1} s_2 , s_1^{-1} s_3 \} = s_1 \{1, s'_2 , s'_3\} $. Since the adjacency matrix is generated by $s^{-1} s_j \backslash \{1\} = s'^{-1}_i s'_j \backslash \{1\}$, to $x^{i'} y^{j'}$, the graph generated by $s$ is isomorphic to the that of $s'$. 
    The interpretation of the condition that the generators have to satisfy becomes clear now: $s_2 s^{-1}_1 s_3 \neq s_3 s^{-1}_1 s_2$ means $s'_2$ and $s'_3$ don't commute, so we can get $x^u = s'_2 s'_3 s'^{-1}_2 s'^{-1}_3 $ with a nonzero $u$. The above condition also implies $s'_2$ or $s'_3$ has a nonzero exponent of $y$. If we assume $s'_2 = x^{u'} y^{v}$ with $\text{gcd}(v,m)=1$ and $\text{gcd}(u,n)=1$, the situation becomes similar to the previous case and we can get a looser upper bound of the diameter $\left(2(q+1)+2\right)m = O(\log n)$.
    
    So far, we assumed the graph of $A$ is connected, which corresponds to the condition $\text{gcd}(u,\ell) = \text{gcd}(v,m)=1$. If the graph is disconnected, we start with the fact that the diameter of $H_X$ is less than the sum of every disjoint subgraph of $A$ and $B$. Consider the worst case, where $\text{gcd}(u,\ell)=u$ and $\text{gcd}(v,m)=v$. The graph can be separated into $uv$ disconnected subgraphs. It can be checked that each subgraph is equivalent to the graph of $s=\{1, x, y\}$ with $q \rightarrow q^v$, $s \rightarrow s/u$ and $m \rightarrow m/v$. Therefore, the diameter of each connected graph is still $O(m)$ and the sum of the diameters is $O(uvm)$. As a result, for any nonzero $u$ and $v$, the diameter of $H_X$ is bounded by $O(\log n)$.
\end{proof}

\begin{thm}[Diameter of abelian 2BGA Tanner graphs]
    Suppose we have a 2BGA code with check weight $w_c$, i.e. there are $w_c$ total group generators in the specifications of the block matrices $A$ and $B$. Then the diameter of the $X$-type and $Z$-type Tanner graphs is at least $\mathrm{\Omega}\big( n^{1/(w_c-1)} \big)$.
\end{thm}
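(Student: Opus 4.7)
The plan is to show that balls in the Tanner graph of any abelian 2BGA code grow at most polynomially in their radius, with the sharp exponent $w_c - 1$, and then invert this bound to obtain the desired diameter lower bound.

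First I would argue that any walk of length $D$ in the $X$-type Tanner graph can be decomposed into qubit-check-qubit hops, each of which contributes a displacement $g_{\text{out}} g_{\text{in}}^{-1}$ in the underlying group $G$, where $g_{\text{in}}, g_{\text{out}}$ range over the $w_c$ generators appearing in $a$ and $b$. Because $G$ is abelian, the total displacement telescopes into a single product $\prod_{j=1}^{w_c} g_j^{e_j}$ where $e_j = \alpha_j - \beta_j$, with $\alpha_j$ counting uses of $g_j$ as ``outgoing from a check'' and $\beta_j$ as ``incoming to a check''. The bipartite structure of the Tanner graph forces $\sum_j \alpha_j = \sum_j \beta_j$ across all check visits, yielding the key hyperplane constraint $\sum_j e_j = 0$ (up to a fixed constant depending on the parity of $D$ and whether the endpoint is a qubit or a check).

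Next I would count the integer tuples $(e_1,\ldots,e_{w_c}) \in \mathbb{Z}^{w_c}$ satisfying both $\sum_j e_j = 0$ and $\sum_j |e_j| \le D$. These are lattice points in an $\ell^1$-ball of radius $D$ restricted to a codimension-one hyperplane, and a standard volume argument gives $O(D^{w_c-1})$ such tuples. Distinct tuples may collapse to the same group element due to torsion or relations in $G$ (which only strengthens the bound), and the $A$/$B$ sector label contributes at most a constant factor, so the ball of radius $D$ around any qubit in the Tanner graph contains at most $O(D^{w_c-1})$ qubits. Since diameter $D$ means every one of the $\mathrm{\Theta}(n)$ qubits lies within distance $D$ of any fixed starting qubit, we must have $D^{w_c-1} \ge \mathrm{\Omega}(n)$, giving $D \ge \mathrm{\Omega}(n^{1/(w_c-1)})$. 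The $Z$-type case follows from the symmetry $A \leftrightarrow B^\transpose$, $B \leftrightarrow A^\transpose$ of the 2BGA construction.

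I expect the main subtlety to be the clean derivation of the hyperplane constraint $\sum_j e_j = 0$, since this is what yields the sharp exponent $w_c - 1$ rather than a naive $w_c$. The argument must uniformly handle walks that freely switch between the $A$ and $B$ qubit sectors, revisit qubits, or traverse checks multiple times; all of these cases are covered by the abelian telescoping, but the accounting must be stated explicitly. I also note that the bound is in general loose when $a$ and $b$ share monomials, since the effective number of distinct generators is then strictly less than $w_c$; this does not affect the validity of the theorem, only the tightness of its constant.
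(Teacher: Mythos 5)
Your proof is correct and follows essentially the same route as the paper's: both arguments reduce to showing that balls of radius $D$ in the qubit adjacency graph contain only $O(D^{w_c-1})$ group elements because one degree of freedom among the $w_c$ generators is removed, and then invert this to bound the diameter. Your hyperplane constraint $\sum_j e_j = 0$ is just the walk-level manifestation of the paper's step of normalizing one generator to the identity (equivalently, replacing each $g_j$ by $g_j g_1^{-1}$, which is legitimate precisely because the exponents sum to zero), so the two accountings coincide.
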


\begin{proof}
    For abelian 2BGA codes, the diameter of the Tanner graph corresponding to $H_X$ or $H_Z$ is larger than that corresponding to $A+B$. Hence, a lower bound on the diameter of $A+B$ will also be a lower bound on the diameters of $H_X$ and $H_Z$. Recall that we can apply a change of basis to all generators to normalize one of them to be 1. We can then regard the $w_c-1$ nontrivial generators as unit vectors that span a vector space where data qubits are associated with distinct vectors. Since we have $w_c-1$ unit vectors, this vector space is at most $(w_c-1)$-dimensional, which means that the diameter of the Tanner graph is at least $\mathrm{\Omega}\big(n^{1/(w_c-1)}\big)$.
\end{proof}


\section{Self-correction, confinement and expansion}

This appendix contains a brief primer on self-correcting memories and expander graphs. Given a classical linear or quantum stabilizer code, we can define a code Hamiltonian that is a (negative) sum of all parity checks. This Hamiltonian is fully commuting and has an integer spectrum labeled by distinct error syndromes whose ground state subspace is the codespace. Loosely speaking, we say that a code is self-correcting if we allow it to interact with a heat bath, according to its code Hamiltonian, at some constant nonzero temperature and can successfully recover the encoded information with high probability after this interaction for a time diverging with system size. Typically, one also assumes that the system-bath interaction is local and obeys detailed balance, so that the steady state is given by the Gibbs distribution; examples of such interactions include classical Markov-chain Monte Carlo algorithms \cite{Metropolis_1953, Hastings_MC_1970, Glauber_1963} as well as their quantum generalizations \cite{Temme_QuantumMetropolis_2011, Chen_Exact_2023, Ding_Efficient_2025, Jiang_QuantumMetropolis_2024}. Successful final recovery is dependent on a specific decoder, which is usually assumed to be a theoretically tractable decoder such as minimum-weight or maximum-likelihood. Formally, let $\rho_\mathcal{C}$ denote an arbitrary state in the codespace, $\mathcal{B}_{T,\tau}$ the CPTP map for the interaction with the bath at temperature $T$ for time $\tau$, and $\mathcal{R}$ the CPTP map for the final recovery operation. We say that a code is self-correcting if for $T < T_{\rm c} = \mathrm{\Omega}(1)$ and $\tau = \omega_n(1)$ (typically exponential),
\begin{align}\label{eq:self-correcting defn}
    \mathcal{R} \circ \mathcal{B}_{T,\tau}[\rho_\mathcal{C}] \propto \rho_\mathcal{C}
\end{align}
with probability $\mathbf{P} = 1 - o_n(1)$. In other words, the Knill-Laflamme QEC conditions \cite{KnillLaflamme_2000} are satisfied for the error channel corresponding to $\mathcal{B}_{T,\tau}$. The word ``self-correcting'' originates from the interpretation that the bath simultaneously corrects as well as produces errors, and below the critical temperature there is a strong bias towards correction. There will always be some residual error at the end, and so the final recovery $\mathcal{R}$ is more or less just a proxy for $\mathcal{B}_{T,\tau}$ according to \eqref{eq:self-correcting defn}.

An important quantity in the analysis of a self-correcting memory is the free energy $F = E - TS = -T\log Z$, where $S$ is the entropy and $Z$ is the canonical partition function. The free energy characterizes the competition between entropy (related to errors) and energy (related to correction). Intuitively, for transitions between different states given by the system-bath interaction, the probability is proportional to $\mathrm{e}^{-\beta\Delta E}$, and the number of transitions that are accessible is proportional to $\mathrm{e}^{\Delta S}$, and so transitions that increase the free energy are unfavored whereas those that decrease the free energy are favored in typical trajectories of the system under random dynamics. Hence, if we desire a lower bound on the self-correcting memory time $\tau$, then it suffices to obtain a lower bound on the free energy cost of incurring a logical error: in other words, to argue that there is a low free energy bottleneck around each codeword that must be reached to incur a logical error \cite{thermal_LDPC}.

Confinement is a code property which, loosely speaking, asserts that increasingly larger errors produce increasingly larger syndromes (energy cost), up to a cutoff known as the energy barrier of the code. Beyond this energy barrier, a suitably large error may actually produce a small syndrome \cite{thermal_LDPC,rakovszky2024ii}. The energy cost as a function of the error weight is known as the confinement function. For an LDPC code, since each qubit only participates in a constant number of parity checks, the confinement function can at most be linear. It was recently shown that a linear confinement function is sufficient to overcome any effects of entropy and result in a macroscopic free energy barrier, thus leading to self-correction \cite{thermal_LDPC,Placke:2024wey, gamarnik2024slow, rakovszky2024bottlenecks}.

\begin{thm}[Small-set linear confinement implies self-correction (informal) \cite{thermal_LDPC}] \label{thm:linear confinement self-correction}
    Suppose we have a quantum LDPC stabilizer code of length $n$ such that the energy cost $\mathcal{E}(P)$ of Pauli error $P$ satisfies
    \begin{align}\label{eq:linear confinement}
        \mathcal{E}(P) \geq \alpha \abs{P} \;,\;\; \forall\, \abs{P} \leq \gamma n^b
    \end{align}
    for constants $\alpha,\gamma,b>0$ independent of $n$. Then there exists a critical temperature, below which the self-correcting memory time $\tau$ diverges as
    \begin{align}
        \tau = \mathrm{e}^{\mathrm{\Omega}(n^b)} \, .
    \end{align}
\end{thm}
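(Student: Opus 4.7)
The plan is to reduce the quantum self-correction problem to a classical free-energy bottleneck estimate via a Peierls-type counting argument, exploiting the commuting-stabilizer structure together with the small-set linear confinement hypothesis \eqref{eq:linear confinement}.

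First I would set up the effective classical dynamics. Since the code Hamiltonian is a commuting sum of Pauli stabilizers with integer spectrum, any local bath coupling satisfying detailed balance (e.g.\ a Davies-type generator with local jump operators) reduces to a classical Glauber dynamics on equivalence classes of Pauli errors modulo stabilizers, with Gibbs distribution $\mu_\beta(P) \propto \mathrm{e}^{-\beta \mathcal{E}(P)}$ at inverse temperature $\beta = 1/T$. By standard Markov-chain theory, the memory time $\tau$ is lower bounded by the inverse of the Cheeger bottleneck ratio $\Phi$ between the trivial equivalence class and any logical equivalence class, so it suffices to exhibit a low-probability cut separating them.

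Second I would identify and bound the bottleneck. Any Glauber trajectory from a codeword to a logical sector must pass through a configuration $P^\star$ containing a connected cluster of errors of size $w^\star = \min(\gamma n^b / 2,\, d/2)$, by continuity of single-qubit moves together with the fact that reaching a nontrivial logical operator requires weight at least the code distance. Inside this cluster, small-set linear confinement gives $\mathcal{E}(P^\star) \geq \alpha w^\star$. A standard Peierls/Kesten enumeration shows that the number of connected clusters of size $w$ anchored at any of $n$ qubits in the bounded-degree Tanner graph is at most $n \cdot c^w$ for a constant $c$ determined by the LDPC degree bounds. Summing,
\begin{equation}
    \mu_\beta(\mathrm{bottleneck}) \;\leq\; n \sum_{w \geq w^\star} c^w\, \mathrm{e}^{-\beta \alpha w} \;=\; \mathrm{e}^{-\mathrm{\Omega}(n^b)}
\end{equation}
as soon as $\beta > \beta_c := (\log c)/\alpha$, which yields $\tau \gtrsim 1/\Phi = \mathrm{e}^{\mathrm{\Omega}(n^b)}$ and identifies the critical temperature $T_{\mathrm{c}} = 1/\beta_c$. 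The final recovery map $\mathcal{R}$ in \eqref{eq:self-correcting defn} can then be taken as maximum-likelihood decoding, which succeeds with probability $1 - o_n(1)$ on any residual error of weight below $w^\star$ by the same confinement bound.

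The hardest step would be the Peierls coarsening: confinement is stated in terms of total Pauli weight $|P|$, but a typical thermal configuration decomposes into many disjoint small clusters, and the counting bound is only useful if confinement applies to the largest cluster individually. This requires arguing that the syndrome supported within a connected cluster is itself produced by a Pauli supported within a bounded neighborhood of that cluster, which follows from locality of the stabilizers, so that one may induct cluster by cluster and attribute the linear energy cost to the dominant bottleneck component. A secondary subtlety is verifying that the classical Glauber reduction faithfully captures the full quantum bath dynamics beyond commuting-stabilizer Davies generators; for the informal statement this reduction is standard, but a fully general quantum bath would require additional input from recent quantum Gibbs sampler analyses.
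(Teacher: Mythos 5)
The paper does not prove this theorem: it is stated as an informal import from \cite{thermal_LDPC} (see also \cite{Placke:2024wey, gamarnik2024slow, rakovszky2024bottlenecks}), so there is no in-paper proof to compare against. Your sketch reproduces the standard bottleneck/Peierls argument from that literature in essentially the right form: reduction of the Davies/Glauber dynamics of a commuting Pauli Hamiltonian to classical dynamics on $X$ and $Z$ error classes, a Cheeger-type lower bound on the escape time via the Gibbs measure of a separating set, connected-cluster enumeration at rate $n\,c^w$ on the bounded-degree adjacency graph, and the critical temperature $\beta_c = (\log c)/\alpha$ emerging from the competition between $c^w$ and $\mathrm{e}^{-\beta\alpha w}$. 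You also correctly flag the genuinely hard step, which is that confinement must be applied cluster-by-cluster rather than to the total weight (a naive count over arbitrary supports gives entropy $w\log(n/w)$, which defeats linear confinement for $b<1$).

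Two imprecisions worth fixing if you were to make this rigorous. First, the claim that a trajectory must pass through a configuration containing a single connected cluster of size exactly $w^\star$ needs care: single-qubit moves can merge several clusters at once, so the first cluster to exceed $w^\star$ may have size $O(1)\cdot w^\star$; one defines the bottleneck as the set of configurations whose largest cluster lies in a window $[w^\star, Cw^\star]$ and chooses $w^\star$ so that $Cw^\star \le \gamma n^b$. Relatedly, you should note that confinement up to $\gamma n^b$ already forces $d > \gamma n^b$ (a logical of weight $\le \gamma n^b$ would have positive energy, a contradiction), so the $\min$ with $d/2$ is automatic. Second, the final recovery step is not that the residual error has total weight below $w^\star$ --- at any $T>0$ the equilibrium error in the trivial sector has weight $\Theta(n)$ --- but that all of its connected clusters are small, so that a cluster-local decoder (or ML) corrects each independently; this is the same cluster decomposition you already invoke, and bounding $\mu_\beta(\text{error contains a fixed cluster }\Gamma)\le \mathrm{e}^{-\beta\alpha|\Gamma|}$ requires a peeling/injection argument using locality of the checks. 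With those repairs your outline matches the cited proof strategy.
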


Note that self-correcting memories exist without the strict requirement of linear confinement \eqref{eq:linear confinement} such as toric codes in $D\geq4$ spatial dimensions \cite{Alicki_2010} and color codes in $D\geq6$ spatial dimensions \cite{Bombin_2013_6D}. However, in both of these examples, the parity checks form an extensively overcomplete set, and the special redundant structure restricts entropic effects. Since our ZSZ codes do not possess such an extensive amount of redundant parity checks, we speculate that our numerical observations of self-correction are attributed to strong confinement in typical error clusters. From \eqref{eq:linear confinement}, it is clear that linear confinement requires the size of the neighborhood or boundary of small vertex sets to be proportional to their volume. This boundary $\propto$ volume correspondence is a defining trait of a special class of sparse graphs known as expander graphs.

\begin{defn}[Graph expansion]\label{defn:expander graph}
    For a graph $\mathcal{G}=(V,E)$, its \emph{edge expansion} or \emph{Cheeger constant} $h(\mathcal{G})$ is defined as
    \begin{align}\label{eq:cheeger}
        h(\mathcal{G}) = \min_{S\subset V:\, \abs{S}\leq\abs{V}/2} \frac{\abs{\partial S}}{\abs{S}} \, ,
    \end{align}
    where $\partial S := \big\{ \{u,v\}\in E \,:\, u\in S,\, v\notin S \big\} \subset E$ is the edge neighborhood of $S$. We say a family of graphs $\{\mathcal{G}_i\}$ of increasing sizes is \emph{expanding} if $h(\mathcal{G}_i) = \mathrm{\Omega}(1)$.
\end{defn}

Intuitively, the edge expansion $h(\mathcal{G})$ tells us how many edges we need to cut in order to disconnect the graph into disjoint components. Note that the condition $h(\mathcal{G}) = \mathrm{\Omega}(1)$ implies that $\mathrm{diam}(\mathcal{G}) = O(\log n)$, since the individual neighborhoods of any two vertices grow exponentially and eventually encompass at least half of all vertices. By the pigeonhole principle, there must then exist a path of length $O(\log n)$ between any two vertices. However, the converse is not true. A regular tree has logarithmic diameter but poor expansion because cutting any edge disconnects the entire branch attached to it, which may have extensive size. 

Typically, one can derive linear confinement bounds from the underlying (bipartite) expansion of the code's Tanner graph. If the expansion coefficient of a subset is larger than half the vertex degree, then there must exist a fraction of parity checks in the neighborhood of that subset that are only connected by a single edge, i.e. a ``unique neighbor'', and hence correspond to unsatisfied checks if the subset is the support of an error. For random graphs, one can probabilistically demonstrate this unique-neighbor expansion up to a constant fraction of vertices using combinatorial arguments \cite{Gallager_1962, ModernCodingTheory}. Explicit constructions typically revolve around local modifications of expanding Cayley graphs, e.g. \cite{Lubotzky_1988, Margulis_1988}, in order to upgrade them to unique-neighbor expanders \cite{Tanner_1981, Sipser_1996, Capalbo_2002}. We now recite a no-go theorem regarding the expansion of Cayley graphs.

\begin{thm}[Non-expanding Cayley graphs; Corollary 3.3 of \cite{Lubotzky_1992}]
\label{thm:non-expansion derived length}
    Let $\{G_i\}$ be a family of finite groups, each with derived length $l=O(1)$ and generating set $S_i$ with $\abs{S_i} = O(1)$. Then the family of Cayley graphs $\mathcal{G}(G_i,S_i)$ is not expanding.
\end{thm}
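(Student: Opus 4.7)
The plan is to induct on the derived length $\ell$, with the base case covering abelian groups and the inductive step lifting non-expanding subsets through the quotient by the last nontrivial term of the derived series (Def.~\ref{defn:derived series}).

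For the base case ($\ell = 1$), the groups $G_i$ are abelian with $|S_i| = k = O(1)$ generators. By commutativity, any product of $r$ generators can be rewritten in canonical form $s_1^{a_1}\cdots s_k^{a_k}$ with $\sum_j |a_j| \leq r$, so the ball $B_r$ around the identity has size $|B_r| = O(r^k)$. Taking $r_n = \lfloor |G_i|^{1/(k+1)} \rfloor$ yields $|B_{r_n}| = O\bigl(|G_i|^{k/(k+1)}\bigr) \leq |G_i|/2$ for $|G_i|$ sufficiently large, while the edge boundary satisfies $|\partial B_{r_n}| \leq |S_i| \cdot \bigl(|B_{r_n+1}| - |B_{r_n}|\bigr) = O(r_n^{k-1})$. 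Hence the Cheeger ratio $|\partial B_{r_n}|/|B_{r_n}| = O(1/r_n) \to 0$, so the Cayley graphs are not expanding per Def.~\ref{defn:expander graph}.

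For the inductive step, assume the theorem for derived length at most $\ell - 1$. Given $G_i$ of derived length $\ell$, set $N_i := G_i^{(\ell-1)}$, a nontrivial abelian normal subgroup, so $G_i/N_i$ has derived length at most $\ell - 1$ and is generated by the projected set $\bar S_i = \pi_i(S_i)$ of size at most $|S_i|$. By the inductive hypothesis, there exist subsets $T_i \subset G_i/N_i$ with $|T_i| \leq |G_i/N_i|/2$ and $|\partial T_i|/|T_i| \to 0$. Lifting to $\tilde T_i := \pi_i^{-1}(T_i) \subset G_i$ gives $|\tilde T_i| = |N_i| \cdot |T_i| \leq |G_i|/2$. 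Using normality of $N_i$, each boundary edge in $\mathcal{G}(G_i/N_i, \bar S_i)$ labeled by $\bar s$ lifts to exactly $|N_i|$ boundary edges in $\mathcal{G}(G_i, S_i)$ per generator $s \in S_i$ with $\pi_i(s) = \bar s$; summing over all such $s$ gives $|\partial \tilde T_i| \leq |S_i| \cdot |N_i| \cdot |\partial T_i|$. Therefore $|\partial \tilde T_i|/|\tilde T_i| \leq |S_i| \cdot |\partial T_i|/|T_i| \to 0$, establishing non-expansion for the family $\{\mathcal{G}(G_i, S_i)\}$.

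The main obstacle is the bookkeeping in the edge-lift argument: one must handle generators $s \in S_i \cap N_i$ (which project to the identity and create only self-loops that do not contribute to the boundary) as well as multiple generators projecting to a common element of $\bar S_i$, both of which are absorbed into the constant factor $|S_i| = O(1)$. A secondary check is that the size constraint $|\tilde T_i| \leq |G_i|/2$ is preserved under lifting, which follows automatically since $\pi_i$ is $|N_i|$-to-one. The polynomial growth bound in the abelian base case and the iterated quotient in the inductive step are both robust precisely because $|S_i|$ and $\ell$ are uniformly bounded; without such a uniform bound on $\ell$, the multiplicative constants would accumulate and the argument would fail, which is consistent with the existence of expanding Cayley graphs over non-solvable groups such as $\mathrm{SL}_2(\mathbb{F}_p)$.
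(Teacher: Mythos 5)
The paper itself does not prove this theorem; it cites Corollary~3.3 of Lubotzky--Weiss and then offers only a short informal intuition (quotient by the commutator subgroup, view a union of adjacent rows as the lift of a non-expanding subset of $\mathbb{Z}_m$). Your inductive ``quotient-and-lift'' scheme formalizes exactly that intuition, so the route is the one the paper gestures at. However, as a self-contained proof it has one genuine structural gap, plus a smaller bookkeeping issue in the base case.

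The serious gap is in the inductive step. You apply the inductive hypothesis to the quotient family $\{G_i/N_i\}$ and extract subsets $T_i$ with $|\partial T_i|/|T_i|\to 0$. But Def.~\ref{defn:expander graph} only has content for families of graphs of \emph{increasing} size, and nothing in the hypothesis prevents $|G_i/N_i|$ from staying bounded while $|G_i|\to\infty$. This is not a corner case: it is exactly the regime of interest for the ZSZ codes in Table~\ref{tab:ZSZ codes}, where $m$ is held fixed and small while $\ell\to\infty$, so that $G_i/N_i \cong \mathbb{Z}_m$ has constant order. In that case every nonempty proper $T_i$ in the connected quotient graph has $|\partial T_i|/|T_i| = \Omega(1)$, the inductive hypothesis yields nothing, and the lift produces nothing. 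A correct proof needs a separate mechanism here: when $N_i$ has bounded index, $N_i$ is generated by $O(1)$ Schreier generators that are bounded-length $S_i$-words, so one can instead recurse \emph{into} $N_i$ (whose derived length is strictly smaller) and translate its non-expanding sets back to $G_i$ through the bounded-index embedding. Your induction silently assumes the quotient grows, which is precisely where it collapses; adding the bounded-quotient branch is a missing idea, not mere bookkeeping.

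The smaller issue is in the base case. From the upper bounds $|B_r| = O(r^k)$ and $|B_{r+1}| - |B_r| = O(r^{k-1})$ you conclude $|\partial B_{r_n}|/|B_{r_n}| = O(1/r_n)$, but that quotient needs a matching \emph{lower} bound $|B_{r_n}| = \Omega(r_n^k)$, which can fail (e.g.\ two generators of $\mathbb{Z}_n$ lying in a short common arithmetic progression give $|B_r| = \Theta(r)$, not $\Theta(r^2)$). The standard repair is an averaging argument: since $\prod_{r<R}\bigl(|B_{r+1}|/|B_r|\bigr) = |B_R|/|B_1| \le C R^k$, the geometric mean of $|B_{r+1}|/|B_r|$ over $r<R$ tends to $1$, so some $r<R$ satisfies $|B_{r+1}|/|B_r| - 1 = O(k\log R / R)$, whence $|\partial B_r|/|B_r| \le |S|\bigl(|B_{r+1}|/|B_r| - 1\bigr) \to 0$ while one keeps $|B_r|\le |G|/2$. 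This one is genuinely minor and easy to patch.
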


We will not formally rederive this established result, but we can explain the intuition behind it. Loosely speaking, a group with a constant derived length, also known as a solvable group, has an ``almost abelian'' structure in the sense that it can viewed as finite extensions of abelian groups. We can see this ``almost abelian'' structure inside $\mathbb{Z}_\ell \rtimes_q \mathbb{Z}_m$ as follows. Its commutator subgroup is precisely the normal subgroup $\mathbb{Z}_\ell$. When we quotient out this normal subgroup, we obtain $\mathbb{Z}_m$. Geometrically (recalling the rectangular layout of Figure \ref{fig:X-check Cayley graphs}), we are treating each row (copy of Cay($\mathbb{Z}_\ell, x$)) as a conglomerate object, with the automorphism $\varphi(x)=x^q$ acting as a coarse-grained ``edge'' between neighboring rows; this is the structure of Cay($\mathbb{Z}_m, y$), which is not an expander. In particular, the size of the neighborhood of a collection of adjacent rows remains constant irrespective of the number of included rows.

\begin{cor}[ZSZ derived length]
    The derived length of the group $\mathbb{Z}_\ell \rtimes_q \mathbb{Z}_m$ is 2, and thus a ZSZ Cayley graph is not an expander according to Definition \ref{defn:expander graph}.
\end{cor}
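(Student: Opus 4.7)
The plan is to prove the two assertions in turn, with the non-expansion conclusion following as an immediate application of Theorem \ref{thm:non-expansion derived length} once the derived length has been pinned down. Essentially all the work lies in computing $G^{(1)} = [G,G]$ for $G = \mathbb{Z}_\ell \rtimes_q \mathbb{Z}_m$ and showing that it is a nontrivial yet abelian subgroup, so that $G^{(2)} = 1$ while $G^{(1)} \neq 1$.

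To bound $[G,G]$ from above, I would exploit the short exact sequence $1 \to \langle x \rangle \to G \to \langle y \rangle \to 1$ encoded in the presentation \eqref{eq:ZSZ presentation}. The quotient $G/\langle x \rangle \cong \mathbb{Z}_m$ is abelian, so every commutator in $G$ must lie in the normal subgroup $N = \langle x \rangle \cong \mathbb{Z}_\ell$; hence $[G,G] \subseteq N$. Being a subgroup of a cyclic (and therefore abelian) group, $[G,G]$ is automatically abelian, so $G^{(2)} = [[G,G],[G,G]] = 1$. To rule out the possibility $[G,G] = 1$, I would exhibit the single commutator
\begin{equation}
    [y,x] = y x y^{-1} x^{-1} = x^{q} x^{-1} = x^{q-1},
\end{equation}
which is nontrivial in $N$ whenever $q \not\equiv 1 \pmod \ell$, the genuinely non-abelian regime of interest. (The degenerate case $q \equiv 1 \pmod \ell$ collapses to a BB code with derived length $1$, and the same non-expansion conclusion still follows from the cited theorem.) Together the two inclusions pin the derived length of $G$ at exactly $2$.

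For the second claim, observe that any ZSZ code considered in this paper is specified by polynomials $a, b \in \mathbb{F}_2[G]$ with a bounded number of monomials independent of $n = 2\ell m$ (three each in the constructions listed in Table \ref{tab:ZSZ codes}). The associated Cayley generator set therefore has $|S| = O(1)$ as $n \to \infty$. Combined with the constant derived length $l = 2$ from the first part, the hypotheses of Theorem \ref{thm:non-expansion derived length} are met, and we conclude that the family of ZSZ Cayley graphs is not expanding in the sense of Definition \ref{defn:expander graph}.

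There is no substantive obstacle to this proof: each step is a few lines of elementary group theory plus a single invocation of a quoted theorem. The only mild subtlety is organisational, namely keeping the upper and lower bounds on $[G,G]$ straight while navigating the non-abelian multiplication rule $yxy^{-1} = x^q$. If desired, one can go further and identify $[G,G] = \langle x^{\gcd(q-1,\ell)} \rangle$ by a short calculation with the push-through relations \eqref{eq:push-through relations}, but this sharper statement is not needed for the derived-length bound.
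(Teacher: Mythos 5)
Your proposal is correct and follows essentially the same route as the paper: identify the commutator subgroup as a subgroup of the cyclic normal subgroup $\langle x\rangle\cong\mathbb{Z}_\ell$, conclude it is abelian so the derived series terminates at the second step, and then invoke Theorem \ref{thm:non-expansion derived length}. Your use of the abelian quotient $G/\langle x\rangle$ to establish $[G,G]\subseteq\langle x\rangle$ is a slightly cleaner justification of the containment the paper asserts informally, and your explicit treatment of the degenerate case $q\equiv 1\pmod\ell$ (where the derived length drops to $1$) is a small point the paper's proof glosses over.
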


\begin{proof}
    Without loss of generality, we label all elements of $\mathbb{Z}_\ell \rtimes_q \mathbb{Z}_m$ as $x^i y^j$ for $i=1,\dots,\ell$ and $j=1,\dots,m$. Since both $\mathbb{Z}_\ell$ and $\mathbb{Z}_m$ are abelian, the only nontrivial elements in the commutator subgroup of their semidirect product come from terms between the two groups. For generators $x \in \mathbb{Z}_\ell$ and $y \in \mathbb{Z}_m$,
    \begin{align}
        [x,y] = x^{-1}y^{-1}xy = x^{-1}x^{q^{m-1}}y^{-1}y = x^{q^{m-1}-1}
    \end{align}
    which generates an abelian subgroup of $\mathbb{Z}_\ell$. Since the commutator subgroup of an abelian group is trivial, the derived series of $\mathbb{Z}_\ell \rtimes_q \mathbb{Z}_m$ terminates after two iterations. Theorem \ref{thm:non-expansion derived length} then asserts that this family of Cayley graphs cannot be expanding.
\end{proof}

Notice, however, that Definition \ref{defn:expander graph} for an expander graph is a \emph{global} one: the boundary $\propto$ volume scaling must persist to half the size of the graph. As is clear from Theorem \ref{thm:linear confinement self-correction}, quantum self-correction will follow from any LDPC code with sufficient expansion on small sets, whose maximum size can be subextensive $o(n)$. Due to both the strong numerical evidence of self-correction as well as the lack of local metachecks, we speculate:

\begin{conj}[Free energy barrier in ZSZ codes]
    There exists a particular sequence of ZSZ codes of increasing length $n$, which exhibit a free energy barrier that grows with $n$ around codewords, and as such possess a self-correction threshold.
\end{conj}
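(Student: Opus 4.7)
The plan is to reduce the conjecture to the small-set linear confinement hypothesis of Theorem~\ref{thm:linear confinement self-correction}: if some infinite family of ZSZ codes satisfies $\mathcal{E}(P)\geq \alpha|P|$ for all Pauli errors with $|P| \leq \gamma n^b$ and fixed constants $\alpha,\gamma,b>0$, then by that theorem the memory time diverges as $\tau = \exp[\Omega(n^b)]$, giving a macroscopic free-energy barrier and therefore a self-correction threshold. The entire task thus becomes: exhibit such a family and prove small-set linear confinement for it.

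First, I would specialize to the parameter family $q=2$ with $\ell = 2^m-1$, which saturates $q^m \equiv 1 \pmod{\ell}$ and realizes the explicit exponential ball growth $|B_r(h)| = \exp[\Omega(r)]$ established in Section~\ref{sec:ZSZ codes} for all $r = O(\log n)$. Within this family I would search over 3-term polynomials $a,b$ such that the individual block matrices $A = L[a]$ and $B = R[b]$ define connected classical codes of linear distance. This provides the geometric substrate on which confinement can plausibly hold, and the explicit binary-expansion construction of the ball ensures that the \emph{necessary} volumetric condition for unique-neighbor expansion is present.

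Second, I would upgrade this Cayley-graph ball growth to a unique-neighbor expansion (UNE) property on the CSS Tanner graph: for some constants $\alpha,\gamma,b>0$ and every qubit subset $S$ with $|S| \leq \gamma n^b$, the number of $X$- or $Z$-checks adjacent to exactly one element of $S$ is at least $\alpha|S|$. From UNE, linear confinement follows at once, since every such unique-neighbor check of $\mathrm{supp}(P)$ must be unsatisfied. Combinatorially, the UNE count reduces to controlling coincidences $g_i g_j^{-1}s = s'$ in $\mathbb{Z}_\ell \rtimes_q \mathbb{Z}_m$ as $s,s'$ range over $S$; the exponential ball growth ensures that \emph{typical} small $S$ have few such coincidences, but establishing this uniformly over all small $S$ is the technical core.

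The main obstacle is precisely the UNE step. The girth upper bound of Appendix~\ref{app:girth} shows that UNE must fail at constant scales and can emerge only beyond a nontrivial length scale, while Theorem~\ref{thm:non-expansion derived length} rules out any global spectral expansion so one cannot simply cite Kazhdan-style bounds. I see two plausible routes: (i) introduce a mild randomization of the exponents in $a$ and $b$ while preserving the semidirect structure, and establish UNE with high probability by a Gallager-style first-moment argument, thereby proving the existence of at least one good code for each sufficiently large $(\ell,m)$; or (ii) decompose the boundary operator along the induced representations of $\mathbb{Z}_\ell \rtimes_q \mathbb{Z}_m$ and show that the only irreducibles that could obstruct expansion are those factoring through the $\mathbb{Z}_m$ quotient, which can be eliminated by a judicious choice of the $\mathbb{Z}_\ell$-exponents in $a,b$. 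Once classical confinement is proven on the $X$- and $Z$-sectors separately, the CSS reduction embedded in Theorem~\ref{thm:linear confinement self-correction} delivers the quantum conclusion.
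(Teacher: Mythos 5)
The statement you are addressing is labeled a \emph{conjecture} in the paper: the authors explicitly do not prove it, offering only the heuristic chain (exponential ball growth in the Cayley graph $\Rightarrow$ plausible small-set expansion $\Rightarrow$ linear confinement $\Rightarrow$ self-correction via Theorem \ref{thm:linear confinement self-correction}) together with numerical evidence. Your proposal retraces exactly this chain, which is fine as far as it goes, but it does not close the gap that makes the statement a conjecture rather than a theorem. The step you yourself flag as ``the technical core'' --- uniform unique-neighbor expansion over \emph{all} subsets $S$ with $|S|\leq \gamma n^b$ --- is precisely the open content of the conjecture, and neither of your two proposed routes is executed or even clearly viable. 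Route (i) has an entropy problem: a Gallager-style first-moment union bound must beat $\exp\!\big(\mathrm{\Theta}(n^b\log n)\big)$ subsets, but randomizing the exponents of two 3-term polynomials supplies only $\mathrm{poly}(n)$ bits of randomness, far too few for the union bound to close; the usual Gallager argument works because the randomness is over full permutations. Route (ii) is obstructed by the very result you cite: Theorem \ref{thm:non-expansion derived length} forces the Cheeger constant (and hence, by Cheeger's inequality, the spectral gap) of these metabelian Cayley graphs to vanish, so a representation-theoretic spectral bound cannot deliver constants independent of $n$, and in any case spectral expansion does not by itself yield unique-neighbor expansion in the regime needed.

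A second, quieter gap is your final sentence. Classical confinement of the $X$- and $Z$-sectors separately does not automatically give quantum confinement: for the CSS code one needs the energy bound to hold for the \emph{minimal-weight representative} of an error modulo the stabilizer group, and the two-block structure allows errors supported jointly on the left and right blocks to cancel syndromes in ways invisible to either classical code alone. This reduction is not ``embedded'' in Theorem \ref{thm:linear confinement self-correction}; it is an additional hypothesis that must be verified. In short, your plan is a sensible research program consistent with the paper's own reasoning, but it is a restatement of the conjecture's difficulty, not a proof of it.
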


\subsection{Relation to noisy greedy decoding} \label{app:glauber}

In this section, we briefly review the connection between local greedy decoding and a particular (classical) Gibbs sampler known as Glauber dynamics. For a quantum CSS code with code Hamiltonian $\mathcal{H} = \mathcal{H}_X + \mathcal{H}_Z$, classical Glauber dynamics on independent $X$ and $Z$ errors is sufficient to sample the entire spectrum of the code Hamiltonian and hence is a valid quantum Gibbs sampler. Without loss of generality, we focus our attention on $X$ errors. In Glauber dynamics, we select a data qubit (labeled by $j$) of the code at random and measure its connected $Z$-checks to obtain a local error syndrome. We then apply the local Pauli $X_j$ operator with probability
\begin{align}\label{eq:Glauber rule}
    \mathbf{P}\big( \ket{\psi}\rightarrow X_j\ket{\psi} \big) = \frac{1}{1+\mathrm{e}^{\beta\mathrm{\Delta} E}} \, ,
\end{align}
where $\mathrm{\Delta} E$ is the change in energy or syndrome weight upon applying $X_j$, and $\beta$ is the inverse temperature. One can quickly verify that \eqref{eq:Glauber rule} along with its $Z$-error version satisfy the detailed balance condition $\rho(\ket{\psi'}) \mathbf{P}(\ket{\psi'}\rightarrow\ket{\psi}) = \rho(\ket{\psi}) \mathbf{P}(\ket{\psi}\rightarrow\ket{\psi'})$ when $\rho \propto \mathrm{e}^{-\beta\mathcal{H}}$ is the equilibrium Gibbs state. Since every data qubit has an equal probability of being chosen, every eigenstate of $\mathcal{H}$ has a nonzero probability of being reached (ergodicity). Thus, the Gibbs state is the unique steady state of Glauber dynamics \eqref{eq:Glauber rule}.

To understand the connection of \eqref{eq:Glauber rule} to greedy decoding, let us first examine the zero-temperature $\beta\rightarrow\infty$ limit. In this limit, \eqref{eq:Glauber rule} becomes a step function: local Paulis that lower the energy ($\mathrm{\Delta}E>0$) are always applied, those that increase the energy ($\mathrm{\Delta}E<0$) are never applied, and ties ($\mathrm{\Delta}E=0$) are handled according to a 50\% coin toss. We can interpret this limit as a greedy decoder that tries to always locally lower the energy according to a majority vote. When $\beta$ is finite, the greedy decoder now applies the local majority ``correctly'' with probability $p$ \eqref{eq:Glauber rule} and ``incorrectly'' with probability
\begin{align}\label{eq:q Glauber}
    q(\mathrm{\Delta}E) = 1-p(\mathrm{\Delta}E) = \frac{1}{1+\mathrm{e}^{\beta\abs{\mathrm{\Delta} E}}} \, .
\end{align}
\eqref{eq:q Glauber} may seem a bit odd at first since it requires very specific failure probabilities for different inputs to the majority vote. It could be the case that we have a uniform noise model where the greedy decoder outputs the wrong answer with a fixed probability for all inputs. In this case, we cannot exactly map it to Glauber dynamics, but we can provide an upper bound and say that it fails \emph{less often} than Glauber dynamics at some finite temperature. Since our code is LDPC, each qubit participates in at most $w_q = O(1)$ $Z$-checks. Notice that at fixed $\beta$, \eqref{eq:q Glauber} is minimized when $\abs{\mathrm{\Delta} E}$ is maximized, which occurs when either all or none of the parity checks are satisfied:
\begin{align}\label{eq:q_min Glauber}
    q_{\rm min} = \frac{1}{1+\mathrm{e}^{\beta w_q}} \, .
\end{align}
So given some failure probability $q$, we can provide an upper-bound temperature $\beta_{\rm upper}$ according to \eqref{eq:q_min Glauber}. If $\beta_{\rm upper} > \beta_{\rm c}$ is in the self-correcting phase, then we can expect the performance of the noisy greedy decoder to be \emph{no worse} than that of Glauber dynamics at temperature $\beta_{\rm upper}$. Note that we can always deliberately add in our own ``errors'' to exactly match \eqref{eq:q Glauber} if desired.


\section{Syndrome extraction and measurement-free decoding}

\subsection{Single-ancilla syndrome extraction}\label{app:syndrome extraction}

\begin{figure}[t]
    \centering
    \includegraphics[width=0.75\textwidth]{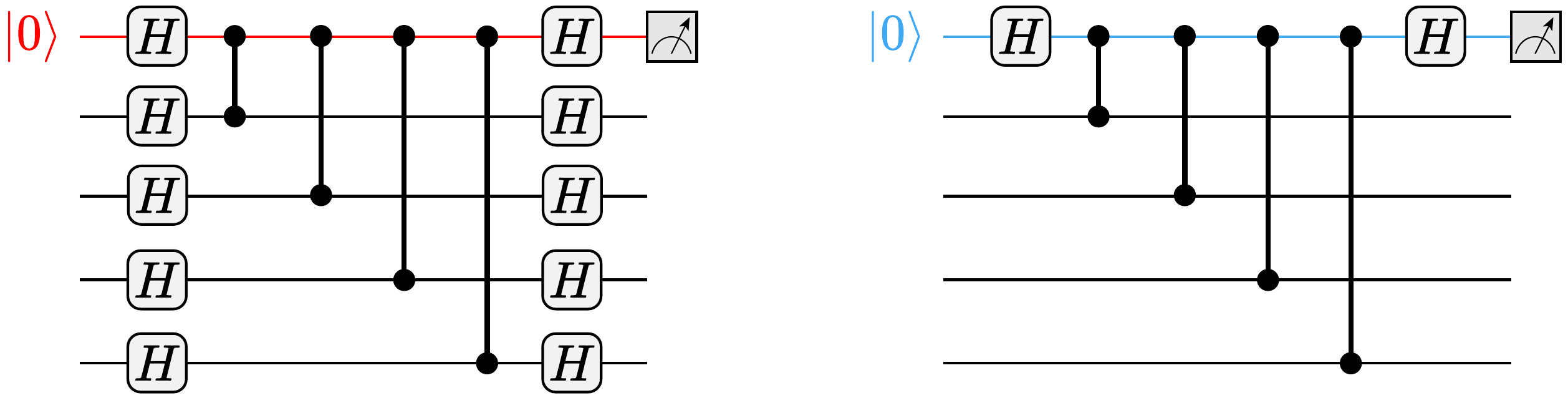}
    \caption{Single-ancilla syndrome extraction circuits for four-qubit $X$-check (left) and $Z$-check (right) operators are shown. The top qubit denotes the syndrome qubit that is to be initialized and measured to obtain the parity of the associated check operator.}
    \label{fig:syndrome ext circuits}
\end{figure}

Since ZSZ codes are LDPC, they are amenable to single-ancilla syndrome extraction: we initialize an ancilla ``syndrome'' qubit for each parity check and interact the syndrome qubit with its corresponding data qubits according to the circuits in Figure \ref{fig:syndrome ext circuits}. While simple, the one drawback to this approach is the potential introduction of ancillary hook errors, where a single fault on the ancilla in the middle of the circuit can propagate to multiple faults on its data qubits. However, since our ZSZ codes are LDPC with check weight 6, these circuits have depth 6 and so any hook error has support on at most 3 data qubits\footnote{More precisely, all hook errors are stabilizer-equivalent to an error with weight at most 3. Note that a hook error occurring at the beginning of the circuit enacts the check operator itself which is trivial since it belongs to the stabilizer group.}.
Thus, up to constant-weight hook errors, this syndrome extraction strategy is inherently fault-tolerant for ZSZ codes. To further mitigate the effect of ancillary hook errors, one can either schedule the CZ gates in a particular order or employ flagged syndrome extraction \cite{Chao_2018_flag} to detect the presence of hook errors at the cost of additional ancilla qubits.

For neutral atoms, the Hadamard gate can be realized by single-qubit laser pulses, and a high-fidelity CZ gate between two atoms can be realized by a blockade interaction \cite{Jaksch_2000, Lukin_2001}. A full round of single-ancilla syndrome extraction can then be performed in two stages as follows. In the first stage, we will extract the $X$-syndrome. Initialize $\ell m = n/2$ syndrome qubits all in $\ket{0}$ and apply a Hadamard gate to all syndrome and data qubits. Using the AOD optical tweezers, move each syndrome qubit to its corresponding data qubits and apply the necessary CZ gates. Finally, apply Hadamard to all syndrome and data qubits. Up to physical noise, the syndrome qubits now ``store'' the classical $X$-syndrome corresponding to $Z$ errors. To extract the $Z$-syndrome in the second stage, we perform a nearly equivalent procedure except that we do not apply the Hadamard gates to the data qubits.

\subsection{Single-shot greedy decoder implementation}\label{app:passive decoder}

At the heart of our passive memory lies a single-shot implementation of the local greedy decoder. A decoder is single-shot if it can reliably operate using information from only $O(1)$ syndrome extraction cycles \cite{Bombin_2015}. Roughly speaking, the goal of a single-shot decoder is not to eliminate \emph{all} errors, but only to reduce enough errors at each cycle such that the encoded information can \emph{eventually} be recoverable. For scalable fault-tolerant MFQEC, it is crucial that this decoder be not only single-shot but also implementable with a constant-depth circuit. So in addition to requiring $O(1)$ syndrome extraction cycles, we also demand $O(1)$ classical time complexity, so that qubit idling times remain finite even as $n\rightarrow\infty$. Note that typical single-shot decoders only require the first condition but not the second; e.g. MWPM for fixing broken loop excitations in the 3D toric code has an $O(n^3)$ classical time complexity.

It remains to show that we can implement a full sweep (all sites acted on once) of the greedy decoder in constant depth. A naive scheduling of the sweep such as ``typewriter'' order may require linear depth since the majority vote on the next data qubit may require the updated syndrome information from the majority vote on the previous qubit. Fortunately, because the ZSZ codes are LDPC, there exist ``non-overlapping'' sets of data qubits that we can address in parallel. Define the qubit $X$-adjacency graph as the simple graph where vertices denote data qubits, and two vertices are connected by an edge if and only if their associated data qubits share an $X$-check; the qubit $Z$-adjacency graph follows suit but with respect to the $Z$-checks. Since each data qubit participates in 3 $X$-checks, and each $X$-check involves $6$ data qubits, the maximum degree of the qubit $X$-adjacency graph is $3(6-1)=15$. Brooks's theorem \cite{Brooks_1941} then tells us that we can partition all the vertices into at most 15 non-overlapping subsets from which we can apply our greedy decoder in parallel. We note that the vertex degree is only an upper bound, and for some graphs the minimum partition, or chromatic number, can be smaller. Using ``sequential'' and ``independent-set'' greedy graph coloring algorithms in \textsf{NetworkX}, we found colorings between 7 and 10 for all of our ZSZ codes in Table \ref{tab:ZSZ codes}. In general, it is \textsf{NP}-hard to compute the chromatic number $\chi$ of an arbitrary graph. Nonetheless, Brooks's theorem provides us with an upper bound $\chi=O(1)$ for LDPC codes. Perhaps better approximate colorings can be obtained by leveraging the underlying Cayley graph structure of the qubit adjacency graphs.

Suppose we have a $\chi$-coloring of our qubit adjacency graph with $\chi=O(1)$. Within each subset of fixed color, we can apply the greedy decoder to all qubits in parallel. The full decoding procedure then proceeds as follows:
\begin{enumerate}
    \item Perform a round of ($X$ or $Z$) syndrome extraction with $n/2$ syndrome qubits.
    \item Choose a non-overlapping subset of data qubits according to the $\chi$-coloring of the qubit adjacency graph and apply the greedy decoder to all subset data qubits in parallel. Update the relevant bits in the error syndrome.
    \item Iterate step 2 ($\chi$ total iterations) until all data qubits have been addressed by the greedy decoder.
\end{enumerate}
In a classical computer, steps 2 and 3 can be compiled into simple boolean arithmetic in a straightforward manner. For a MFQEC implementation, we will need to compile these instructions into a quantum circuit. For step 2, we can borrow the reversible circuit of \cite{Boykin_2005} to implement the majority vote on three inputs:
\begin{align}
 \begin{quantikz}
     \lstick{$\ket{q_0}$} & \ctrl{1} & \ctrl{2}& \targ{} & \rstick{$\ket{\text{maj}}$} \\
     \lstick{$\ket{q_1}$} & \targ{} & & \ctrl{0} & \ground{} \\
     \lstick{$\ket{q_2}$} & & \targ{} & \ctrl{-2} & \ground{}
 \end{quantikz}
\end{align}
The above majority circuit does not preserve the inputs, but one can simply copy their classical values prior to the circuit using fresh ancillas in $\ket{0}$ and CNOT gates. After the majority qubit is obtained, we apply a controlled gate from it to the data qubit, with the target rotation being $X$ or $Z$ depending on which Pauli error we are correcting. The syndrome qubits can also be updated similarly using CNOT gates from the majority qubit. These updated syndrome qubits can then be used for step 3. After we perform the full sweep of the greedy decoder on all data qubits, we can discard all ancilla qubits and repeat the process, starting with a new round of syndrome extraction.


\section{Additional numerical simulations}

\subsection{Sustainable threshold estimation}\label{app:passive threshold}

We perform additional numerical simulations of passive error correction near the region ($p\approx 10^{-3}$) where the curves intersect in Figure \ref{fig:passive decoding}.
Figure \ref{fig:threshold estimate} presents these numerical results for both the ZSZ codes as well as the 4D toric code family. From the plots, we estimate the sustainable threshold of ZSZ codes to be $p^{\rm ZSZ}_{\rm th} \approx 0.095\%$; similarly, we estimate a sustainable threshold of $p^{\rm 4D}_{\rm th} \approx 0.063\%$ for the 4D toric code.

\begin{figure}[t]
    \centering
    \includegraphics[width=0.48\textwidth]{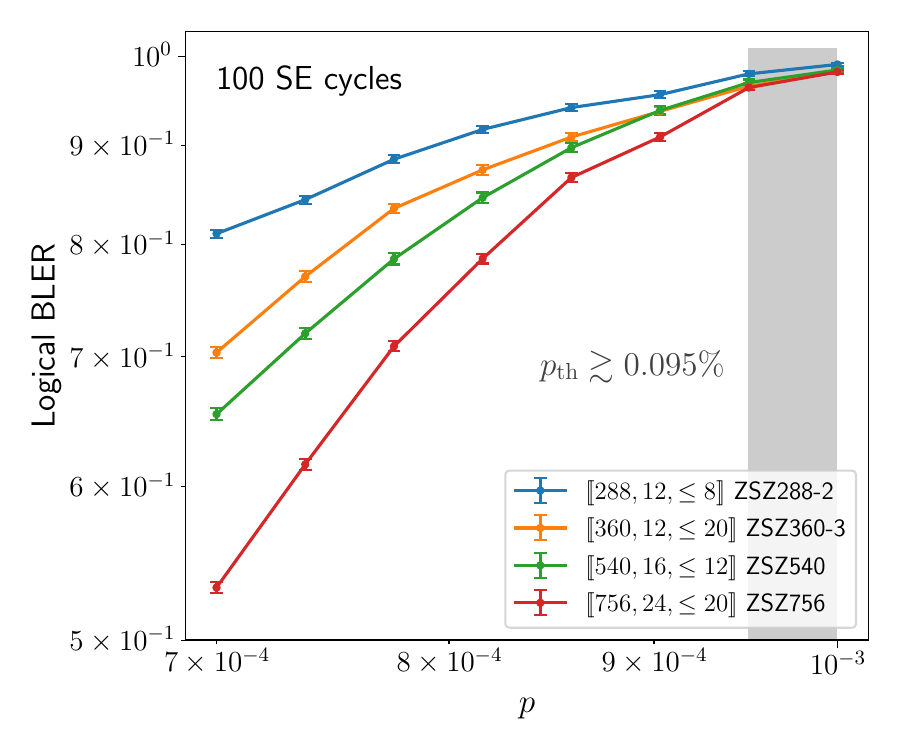} \hfill
    \includegraphics[width=0.48\textwidth]{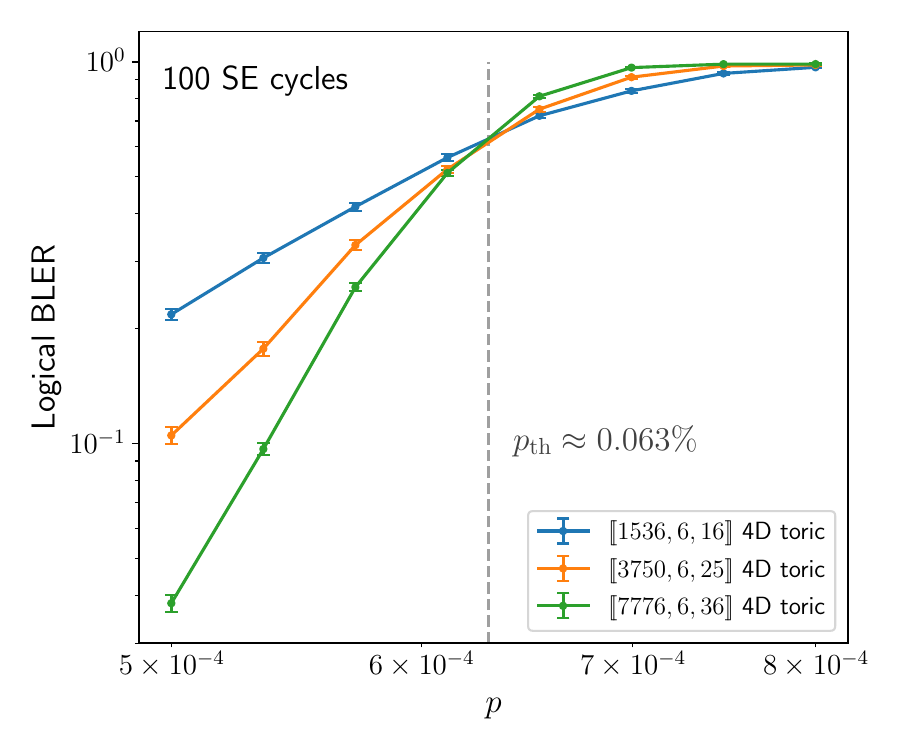}
    \caption{The logical block error rate (BLER) as a function of the physical noise strength $p$ is plotted near the observed sustainable thresholds for both the ZSZ codes as well as 4D toric codes with linear sizes $L=4,5,6$ under passive decoding. The curves of the 4D toric codes display a clear intersection around $p\approx 0.063\%$. The curves of the ZSZ codes do not show an obvious intersection, but we observe subthreshold behavior (decreasing BLER with increasing $n$) below the region highlighted by the shaded rectangle, and so we estimate a lower bound $p\gtrsim 0.095\%$ on the sustainable threshold.}
    \label{fig:threshold estimate}
\end{figure}

\subsection{Passive error correction of bivariate bicycle codes}\label{app:passive BB}

We perform an equivalent numerical search and noise simulation for BB codes as for our ZSZ codes under passive decoding. See Table \ref{tab:BB codes} for the parameters of the BB codes we have found that achieved the best observed performance under passive decoding. The numerical simulation results are presented in Figure \ref{fig:BB passive decoding}. Unlike the case for ZSZ codes, we do not see reasonable evidence of a sustainable threshold. The left plot seems to suggest a transition near $p\approx 0.04\%$, but upon examination of the right plot, we see that this behavior is simply a finite-size effect: the logical error rates of the larger BB codes do not stabilize with increasing syndrome extraction cycles and eventually surpass those of the smaller BB codes.

\begin{table}[t]
\centering\renewcommand{\arraystretch}{1.5}
\begin{tabular}{@{}c|c|c|c|c|c@{}}
\hline
Decoding & Name & $\llbracket n,k,d \rrbracket$ & $\ell,m$  & $A$ & $B$  \\    
\hline
\multirow{3}{*}{passive}  
& BB144-2 & $\llbracket 144,12,\leq12 \rrbracket$ & 12,6 & $1+x^{11}y+x^4y^{5}$ & $x^6+x^7+x^8y^3$ \\
& BB360-2 & $\llbracket 360,12,\leq12 \rrbracket$ & 30,6 & $x^9y^2+x^{3}y^3+x^2y^{5}$ & $x^{25}+x^{21}+x^{18}y^5$ \\
& BB756-1 & $\llbracket 756,16,\leq20 \rrbracket$ & 21,18 & $1+x^{20}y^3+x^4y^{6}$ & $x^8y^9+x^{20}y^{17}+y^{17}$ \\
\hline
\end{tabular}
\caption{Bivariate bicycle codes and their parameters for the simulations in Figures \ref{fig:BB passive decoding} are displayed. Code distances are numerically estimated using the GAP package \textsf{QDistRnd} \cite{QDistRnd}.}
\label{tab:BB codes}
\end{table}

\begin{figure}[t]
\centering
\includegraphics[width=0.48\textwidth]{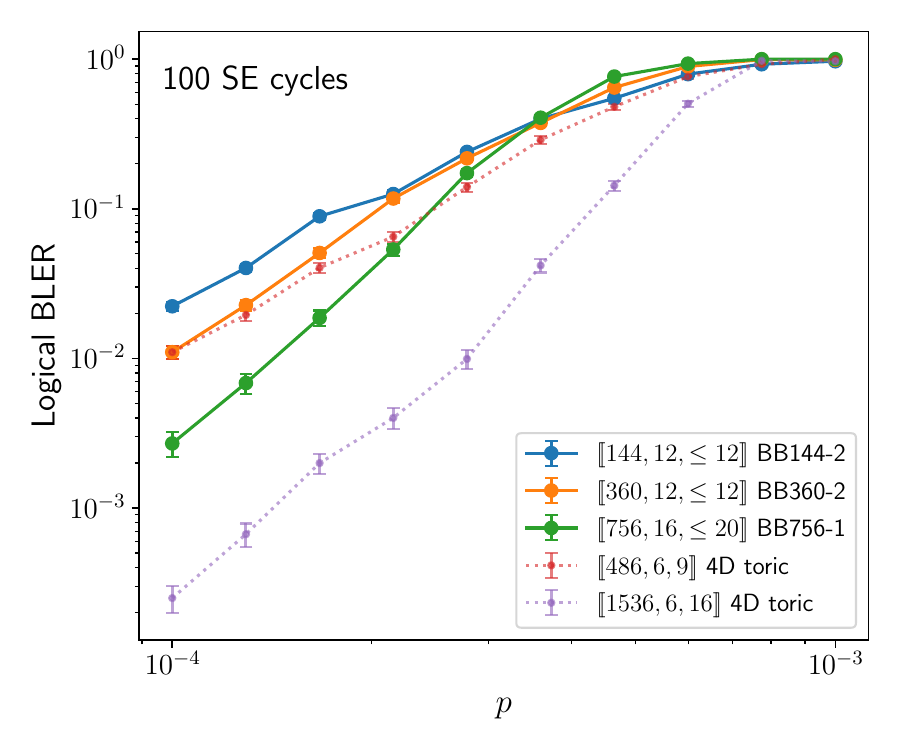}\hfill
\includegraphics[width=0.48\textwidth]{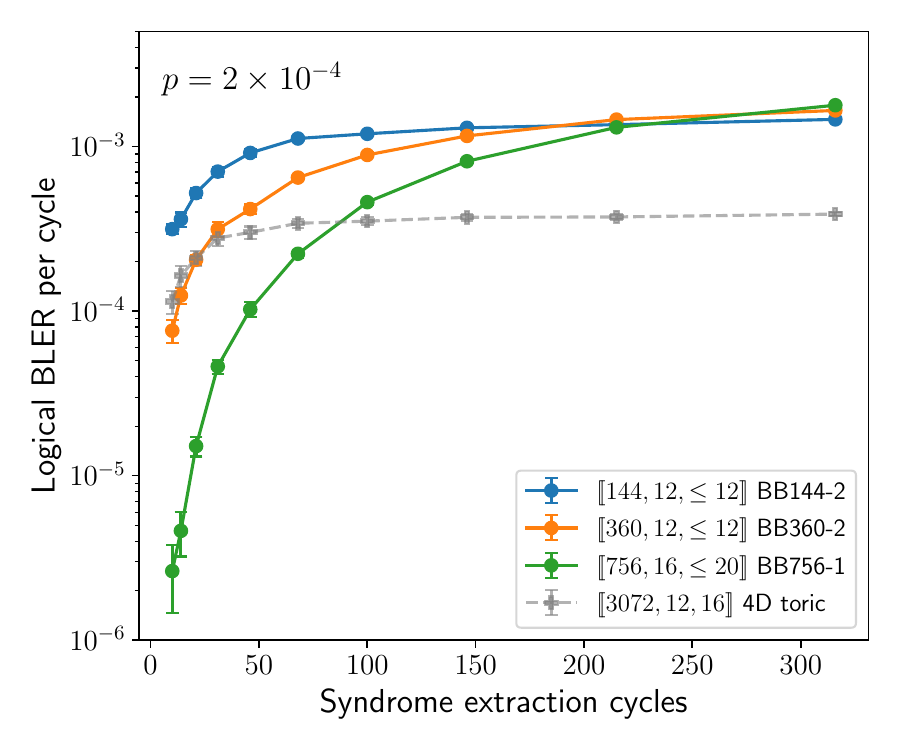}
\caption{The analogue of Figure \ref{fig:passive decoding} for the BB codes is displayed. In contrast to ZSZ codes, we do not observe any evidence of a sustainable threshold for BB codes under passive decoding.}
\label{fig:BB passive decoding}
\end{figure}

\bibliography{thebib}
\end{document}